\DeclarePairedDelimiter\bra{\langle}{|}
\DeclarePairedDelimiter\ket{|}{\rangle}
\DeclarePairedDelimiterX\braket[2]{\langle}{\rangle}{#1 \delimsize\vert #2}
\DeclarePairedDelimiterX\ketbra[2]{| }{|}{#1 \delimsize\rangle\!\delimsize\langle #2}
\newcommand{\vertiii}[1]{{\left\vert\kern-0.25ex\left\vert\kern-0.25ex\left\vert #1 
		\right\vert\kern-0.25ex\right\vert\kern-0.25ex\right\vert}}
\newcommand{\bigO}[1]{\mathcal{O}\left( #1 \right)}
 \newcommand{\authnote}[3]{{\color{#3} {\bf  #1:} #2}}
 \newcommand{\authnote}[3]{}
\newtcolorbox[auto counter]{tbox}[2][]{%
	enhanced, float=hbt, drop fuzzy shadow southeast,
	colback=white!5!white, colframe=white!30!black,
	width= .97\columnwidth,sharp corners,boxrule=0.8pt,
	title={#2}, #1
}
\newtheorem{theorem}{Theorem}
\newtheorem{lemma}{Lemma}
\newtheorem{corollary}{Corollary}
\newcommand{\mx}[1]{|{#1}|_\mathrm{max}}
\newcommand{\twonm}[1]{||{#1}||_2}
\newcommand{\fillfrac}[2]{\mathcal{F}_{#1}^{[{#2}]}}
\newcommand{\norm}[1]{\mathcal{N}_{#1}}
\newcommand{\eps}{\varepsilon}
\newcommand{\fu}{\underline{f}}
\newcommand{\fut}{\underline{\tilde{f}}}
\newtcolorbox{codebox}{enhanced, width=.95\columnwidth, halign = flush left, drop fuzzy shadow southeast, boxrule=0.4pt, sharp corners, colframe=black, colback=white}
\begin{document}

\title{Quantum state preparation without coherent arithmetic}

\author{Sam McArdle}
\affiliation{AWS Center for Quantum Computing, Pasadena, CA 91125, USA}

\author{Andr\'{a}s Gily\'{e}n}
\affiliation{Alfr\'{e}d R\'{e}nyi Institute of Mathematics, Budapest, Hungary}

\author{Mario Berta}
\affiliation{AWS Center for Quantum Computing, Pasadena, CA 91125, USA}
\affiliation{Department of Computing, Imperial College London, London, UK}
\affiliation{Institute for Quantum Information, RWTH Aachen University, Aachen, Germany}

\date{\today}


\begin{abstract}
We introduce a versatile method for preparing a quantum state whose amplitudes are given by some known function. Unlike existing approaches, our method does not require handcrafted reversible arithmetic circuits, or quantum table reads, to encode the function values. Instead, we use a template quantum eigenvalue transformation circuit to convert a low cost block encoding of the sine function into the desired function. Our method uses only $4$ ancilla qubits (3 if the approximating polynomial has definite parity),  providing order-of-magnitude qubit count reductions compared to state-of-the-art approaches, while using a similar number of gates if the function can be well represented by a polynomial or Fourier approximation. Like black-box methods, the complexity of our approach depends on the `L2-norm filling-fraction' of the function. We demonstrate the algorithmic utility of our method, including preparing Gaussian and Kaiser window states.
\end{abstract}

\maketitle


\section{Introduction}

\paragraph*{Problem setting.} We seek to prepare an $N=2^n$ dimensional quantum state on $n$ qubits with amplitudes described by a known function $f(\bar{x})$ (where $\bar{x}$ is a suitable rescaling of the binary qubit register state $\ket{x}$). Such states are used in many quantum algorithms, including: basis and boundary functions in finite element analysis~\cite{montanaro2016quantum, scherer2017LinearSystemsResource} or differential equations~\cite{berry2017quantum,leyton2008quantum,cao2013quantum}, states in quantum simulations of field theories~\cite{jordan2012quantumfieldtheory,klco2021HierarchicalFieldTheory}, payoff and price distribution functions for financial derivative pricing~\cite{stamatopoulos2020option, Chakrabarti2021thresholdquantum}, priors for phase estimation~\cite{berry2022quantifyingTDA}, and radial and angular electron-orbital wave-functions in grid-based quantum chemistry simulations~\cite{ward2009preparation,chan2022grid}. Typical preparation methods~\cite{grover2000synthesis,sanders2019black,wang2021fast,rattew2022preparing} require an amplitude oracle $\ket{x} \ket{0} \rightarrow \ket{x} \ket{f(\bar{x})}$ that prepares a $g$-bit approximation of $f(\bar{x})$ (or some closely related oracle~\cite{grover2002creating,bausch2020fast, wang2021Inverseblackbox}). This can be implemented either by coherent arithmetic~\cite{munoz2018t, bhaskar2015quantum,haner2018optimizing}, or by reading values stored in a quantum lookup-table~\cite{ScirateThread2022, krishnakumar2022lookuptable}. Both can have high qubit and gate costs. Coherent arithmetic circuits are manually-optimized to minimize resources and incorporate the nuances of fixed-point arithmetic, such as overflow errors~\cite{haner2018optimizing}. Our approach does not use an amplitude oracle, saving considerable resources. This is vital in the early fault-tolerant regime, where we seek to minimize the footprint of quantum algorithms~\cite{campbell2021early,wan2022randomized,lin2022heisenberggroundstate,dong2022ground}.


\paragraph*{Framework.} Our method uses quantum singular value transformation (QSVT)~\cite{gilyen2018quantum} a technique to coherently apply functions to the singular values of a block-encoded matrix~\footnote{In this work, we block-encode a diagonal Hermitian matrix. The singular values of this matrix are the absolute values of the eigenvalues. Thus QSVT will perform eigenvalue transformation, where the sign information is stored in the left singular vectors.}. An $(n+m)$-qubit unitary $U$ is said to be an $(\alpha, m, \epsilon)$-block-encoding of an $n$-qubit Hermitian matrix $A$ if
\begin{equation}
    \bigg{|}\bigg{|} \alpha \left( \bra{0}^{\otimes m} \otimes I_n \right) U \left( \ket{0}^{\otimes m} \otimes I_n \right) - A \bigg{|}\bigg{|} \leq \epsilon.
\end{equation}
The default QSVT approach~\cite{gilyen2018quantum} uses $d/2$ applications each of $U, U^\dagger$, $2d$  $m$-controlled Toffoli gates (which are just CNOT gates for the $m=1$ case herein), and $\mathcal{O}(d)$ single-qubit gates to block-encode a degree $d$ real and definite-parity polynomial of $A$. Using linear combinations of block-encodings, we can block-encode complex, mixed-parity functions~\cite{gilyen2018quantum}. 


\paragraph*{Approach.} We present our method in detail for $f\colon[-a, a]\rightarrow \mathbb{R}$ of definite-parity, and seek to prepare 
    \begin{equation*}
    \ket{\Psi_f} := \frac{1}{\norm{f}} \sum_{x=-\frac{N}{2}}^{\frac{N}{2}-1} f\left( \bar{x} \right) \ket{x},
\end{equation*}
where $N=2^n$, $\bar{x} :=\left( 2ax/N\right)$, and $\mathcal{N}_f := \sqrt{\sum |f(\cdot)|^2}$. We use a two's complement representation of signed integers (see Appendix~\ref{App:SignedIntegers})\footnote{The method can be easily adapted to other representations of integers.}.
The extension to the mixed-parity and complex case can be achieved through linear combinations of block-encodings~\cite{gilyen2018quantum,dong2021efficient}. As shown in Fig.~\ref{fig:CircuitDiagram}, we use QSVT to convert a low-cost block-encoding of $A:=\sum_{x=-\frac{N}{2}}^{\frac{N}{2}-1}\sin(2x/N) \ketbra{x}{x}$, into a block-encoding of $\sum_x f(\bar{x}) \ketbra{x}{x}$, using a polynomial approximation of $f(a\arcsin(\cdot))$. Our approach is well suited to functions with low-degree polynomial (or Fourier) approximations, and provides order-of-magnitude reductions in the number of ancilla qubits used. Unlike amplitude-oracle-based approaches, we avoid discretizing the values the function can take, yielding a continuous approximation to the function. Our method is versatile, as the same circuit template can be used for all functions.

\paragraph*{Related work.} Refs.~\cite{van2019quantumzerosum,guo2021nonlinear} used similar QSVT-based techniques for a related task of transforming amplitudes encoded via a black-box state-preparation unitary or QRAM. If used for the task considered herein, these techniques would require more qubits and introduce a larger subnormalization factor than our white-box approach.

\paragraph*{Outline.} Sec.~\ref{Sec:MainResult} introduces our method, with our main result presented in \Cref{thm:genericComplexity}. Sec.~\ref{Sec:Applications} provides theoretical complexities and concrete resource estimates for preparing algorithmically valuable functions. Sec.~\ref{Sec:Extensions} discusses extensions for dealing with discontinuities, using improved priors, and Fourier approximations.


\section{Main result}\label{Sec:MainResult}

For a function $p(y)$ in the range $y \in [-a,a]$ we define the `discretized L2-norm filling-fraction'
\begin{equation}
    \fillfrac{p}{N} = \frac{\norm{p}}{\sqrt{N} |p(y)|_{\mathrm{max}}^{y \in [-a,a]}}
\end{equation}
which approximates the continuous quantity $\fillfrac{p}{\infty} := \sqrt{ \frac{\int_{-a}^a |p(y)|^2 dy}{2a \left(|p(y)|_{\mathrm{max}}^{y \in [-a,a]} \right)^2} }$. This quantity plays a key role in the complexity of our state preparation technique. 

Our method also requires a degree $d$ definite-parity polynomial $h(y)$, obeying $|h(y)|_{\mathrm{max}}^{y \in [-1,1]} \leq  1$, such that $\tilde{f}(y) := h(\sin(y/a))$ approximates the definite-parity function $f(y)$ on the interval $[-a,a]$. Given a sufficiently good $h(\cdot)$, we prove the following main result:

\begin{restatable}{theorem}{General}\label{thm:genericComplexity}
Given a degree $d$ definite-parity function $h(y)$ such that $|h(y)|_{\mathrm{max}}^{y \in [-1,1]} \leq  1$, which approximates $f(\cdot)$ as 
\begin{equation}
    \left|\tilde{f}(y) - \frac{f(a y)}{\mx{f(ay)}^{y \in [-1,1]}}\right|_{\mathrm{max}}^{y \in [-1,1]}  \leq \frac{\epsilon~\cdot~\mathrm{Min}\left(\fillfrac{f}{N},  \fillfrac{\tilde{f}}{N} \right)}{3} 
\end{equation}
where $\tilde{f}(y) := h(\sin(y/a))$, then we can prepare a quantum state $\ket{\Psi_{\tilde{f}}}$ that is no more than $\epsilon$-far from $\ket{\Psi_f}$ in trace distance using a quantum circuit requiring $\mathcal{O}\left( \frac{n d}{\fillfrac{\tilde{f}}{N}}  \right)$ gates and at most 3 ancilla qubits.
\end{restatable}
\begin{proof}
    A full proof is given in Appendix~\ref{App:ErrorAnalysis}. We sketch the main proof idea here. Recall $\bar{x} = 2ax/N$. The circuit in Fig.~\ref{fig:CircuitDiagram}a implements a $(1,1,0)$ block-encoding $U_{\mathrm{sin}}$ of $\sum_x \mathrm{sin}(\bar{x}/a) \ket{x}\bra{x}$ using $\mathcal{O}(n)$ gates. The circuit in Fig.~\ref{fig:CircuitDiagram}b uses QSVT to implement a $(1,2,0)$ block-encoding $U_{\tilde{f}}$ of $\sum_x h(\mathrm{sin}(\bar{x}/a)) \ket{x}\bra{x} = \sum_x \tilde{f}(\bar{x}) \ket{x}\bra{x}$ using $\mathcal{O}(d)$ calls to $U_{\sin}$, $U_{\mathrm{sin}}^\dag$ and $\mathcal{O}(d)$ additional elementary gates. The requirement $|h(y)|_{\mathrm{max}}^{y \in [-1,1]} \leq  1$ ensures the polynomial can be applied as a QSVT transformation. Applying $U_{\tilde{f}}$ to $\ket{00}\frac{1}{\sqrt{N}} \sum_x \ket{x}$ and measuring the ancilla qubits in $\ket{00}$ outputs $\ket{\Psi_{\tilde{f}}}$ that is no more than $\epsilon$-far from $\ket{\Psi_f}$ in trace distance with success probability at least $\frac{4}{9} \left(\fillfrac{\tilde{f}}{N}\right)^2$. The circuit in Fig.~\ref{fig:CircuitDiagram}c applies exact amplitude amplification (see Appendix~\ref{App:AmpAmp}) to boost the success probability to unity, using $\mathcal{O}\left(1/\fillfrac{\tilde{f}}{N}\right)$ calls to $U_{\tilde{f}}$, $U_{\tilde{f}}^\dag$, $\mathcal{O}\left(n/\fillfrac{\tilde{f}}{N}\right)$ additional elementary gates, and at most one additional ancilla qubit. In total, the circuit uses $\mathcal{O}\left( \frac{n d}{\fillfrac{\tilde{f}}{N}}  \right)$ gates and at most 3 ancilla qubits.
\end{proof}

\begin{figure}[!ht]
    \hskip-2mm
    \scalebox{0.8}{
    \begin{minipage}{11cm}
    	\hskip-113mm a)\vskip-10mm
		\[
		\Qcircuit @C=.5em @R=0.2em @!R {
			\lstick{\ket{a_1}} & \push{\rule{.1em}{0em}} & \gate{H} & \ctrl{4} & \qw & \qw & \ctrl{4} & \qw & \gate{R_z(\phi)} & \gate{H} & \gate{Y} & \qw \\
			\lstick{\ket{x_0}} & \push{\rule{.1em}{0em}} & \qw & \targ & \qw & \gate{R_z\left(2^{1-n}\right)} & \targ & \qw & \qw & \qw & \qw \\
			\lstick{\ket{x_1}} & \push{\rule{.1em}{0em}} & \qw & \targ & \qw & \gate{R_z\left(2^{2-n}\right)} & \targ & \qw & \qw & \qw & \qw \\
			\lstick{\vdots} & \push{\rule{.1em}{0em}} & \qw & \targ & \qw & \gate{\vdots} & \targ & \qw & \qw & \qw & \qw  \\
			\lstick{\ket{x_{n-1}}} & \push{\rule{.1em}{0em}} & \qw & \targ & \qw & \gate{R_z\left(-2^{0}\right)} & \targ & \qw & \qw & \qw & \qw
		}
		\]
    	
    	\hskip-113mm b)\vskip-10mm
		\[
		\Qcircuit @C=.3em @R=0em @!R {
			\lstick{\ket{a_2}} & \push{\rule{0mm}{7mm}}\qw & \gate{H} & \targ & \gate{R_z^{\theta_1}} & \targ & \qw & \qw & \targ & \gate{R_z^{\theta_2}} & \targ & \qw & \qw & \qw & \push{\rule{.1em}{0em}\dots\rule{.1em}{0em}} \\
			\lstick{\ket{a_1}} & \qw & \multigate{1}{U_{\mathrm{sin}}} & \ctrlo{-1} & \qw & \ctrlo{-1} & \multigate{1}{U_{\mathrm{sin}}^\dag} & \qw & \ctrlo{-1} & \qw & \ctrlo{-1} & \qw & \multigate{1}{U_{\mathrm{sin}}} & \qw & \push{\rule{.1em}{0em}\dots\rule{.1em}{0em}} \\
			\lstick{\ket{x}_n } & {/} \qw &  \ghost{U_{\mathrm{sin}}} & \qw & \qw & \qw & \ghost{U_{\mathrm{sin}}^\dag} & \qw & \qw & \qw & \qw & \qw & \ghost{U_{\mathrm{sin}}} & \qw & \push{\rule{.1em}{0em}\dots\rule{.1em}{0em}}
		}
		\]
    	
    	\hskip-113mm c)\vskip-10mm
		\[
		\Qcircuit @C=0.2em @R=.4em {
			\lstick{\ket{0}_{a_3}} & \qw & \qw & \gate{R_y(\omega)} & \ctrlo{1} & \gate{R_y(-\omega)} & \qw & \ctrlo{1} & \qw & \gate{R_y(\omega)} & \qw &\push{\rule{.1em}{0em}\dots\rule{.1em}{0em}}\\ 
			\lstick{\ket{00}_{a_1 a_2} } & {/} \qw &  \qw & \multigate{1}{U_{\tilde{f}}} & \ctrlo{-1} & \multigate{1}{U_{\tilde{f}}^\dag} & \qw & \ctrlo{-1} & \qw & \multigate{1}{U_{\tilde{f}}} & \qw & \push{\rule{.1em}{0em}\dots\rule{.1em}{0em}}\\ 
			\lstick{\ket{\bar{0}}_n} & {/} \qw & \gate{H^{\otimes n}} & \ghost{U_{\tilde{f}}} & \qw & \ghost{U_{\tilde{f}}^\dag} & \gate{H^{\otimes n}} & \ctrlo{-1} & \gate{H^{\otimes n}} & \ghost{U_{\tilde{f}}} &\qw & \push{\rule{.1em}{0em}\dots\rule{.1em}{0em}}
		}
		\]
    \end{minipage}
	}\kern-9mm
    \caption{The quantum circuit implementing QSVT-based state preparation. We define $R_y(\theta) := e^{-i\theta Y}$, $R_z(\theta) = \mathrm{Diag}(1, e^{i\theta})$.
    a) The circuit $U_{\mathrm{sin}}$ that block-encodes $\sum_{x} \mathrm{sin}(2x/N) \ketbra{x}{x}$ by applying a Hadamard test circuit to a directionally controlled phase gradient~\cite{GidneyBlog2017} (see Lemma~\ref{Lemma:BlockEncodeSin}). This circuit requires (n+1) $Z$ rotations, and CNOT chains that can be implemented in $\mathcal{O}(\log(n))$ depth~\cite{low2018trading}, and can be further optimized for fault-tolerant implementation in e.g. the surface code~\footnote{When implementing the multitarget CNOT gates using lattice surgery, they can be implemented in depth independent of $n$. The $Z$ rotations (which must be decomposed into a number of $T$ gates) can be replaced by an addition circuit composed of Toffoli gates by using a phase gradient catalyst state~\cite{Gidney2018halvingcostof,litinski2022active}.}. b) The circuit $U_{\tilde{f}}$ that block-encodes $\sum_x \tilde{f}(\bar{x}) \ketbra{x}{x} $ by applying QSVT to $U_{\mathrm{sin}}$. The angles $\theta_i$ correspond to the pre-computed QSVT-angles for the desired polynomial. c) The (exact) amplitude-amplification circuit which block encodes $ \ketbra{\Psi_{\tilde{f}}}{\bar{0}}$, including an additional qubit to adjust the amplitude (see Appendix~\ref{App:AmpAmp}).}
    \label{fig:CircuitDiagram}
\end{figure}

The constant factor hidden by the big-$\mathcal{O}$ notation is function dependent, and may depend on the scaling factor $a$. For smooth functions that can be well approximated by polynomials, one can typically obtain an $L_\infty$-error $\delta$ decaying as $\mathcal{O}\left( \exp(-d) \right)$ for a degree $d$ approximating polynomial. We prove this formally in Appendix~\ref{App:Taylor}. For such functions, we can then prepare a quantum state $\ket{\Psi_{\tilde{f}}}$ that is $\epsilon$-close in trace-distance to $\ket{\Psi_f}$ 
using 
\begin{equation}
    \widetilde{\mathcal{O}}\left( \frac{n}{\fillfrac{\tilde{f}}{N}} \log\left( \frac{1}{\epsilon}\right) \right)
\end{equation}
gates, where the notation $\widetilde{\mathcal{O}}(\cdot)$ hides poly-logarithmic terms. As $N$ is increased, $\fillfrac{\tilde{f}}{N} \rightarrow \fillfrac{\tilde{f}}{\infty}$, a constant value independent of $N$, for a given function. Furthermore, in practice the error analysis can be tightened, as discussed in Appendix~\ref{App:TighterErrorAnalysis}.

\begin{table*}[t]
    \centering
    \begin{tabular}{c|c|c|c|c}
    & \makecell{\# Calls to amplitude oracle} & \makecell{\# Non-Clifford gates} & \makecell{\# Ancilla qubits} & Applicability \\ \hline
        \makecell{QSVT-based (This work)} & None & $\mathcal{O}\left( n  d_{\epsilon}/\fillfrac{\tilde{f}}{N}  \right)$ & 3 &\makecell{Polynomial approximation} \\ \hline
\makecell{Black-box \cite{grover2000synthesis,sanders2019black,bausch2020fast,wang2021fast,wang2021Inverseblackbox}}   & $\mathcal{O}\left( 1/ \fillfrac{f}{N}  \right)$ & $\mathcal{O}\left( g_{\epsilon}^2 \tilde{d}_{\epsilon} / \fillfrac{f}{N}  \right)$ & $\mathcal{O}(g_{\epsilon}\tilde{d}_{\epsilon})$ & \makecell{Generally applicable} \\ \hline
Grover-Rudolph~\cite{grover2002creating} & $\mathcal{O}\left(n \right)$  &  $\mathcal{O}\left(n g_{\epsilon}^2 \tilde{d}_{\epsilon} \right)$  & $\mathcal{O}(g_{\epsilon} \tilde{d}_{\epsilon})$ & \makecell{Efficiently integrable \\probability distributions} \\ \hline
\makecell{Adiabatic state\\ preparation~\cite{rattew2022preparing}} & $\mathcal{O}\left( \frac{1}{\left(\fillfrac{f}{N} \right)^4 \epsilon^2} \right)$ & $\mathcal{O}\left( \frac{g_{\epsilon}^2 \cdot \tilde{d}_{\epsilon}}{\left(\fillfrac{f}{N} \right)^4 \epsilon^2} \right)$ & $\mathcal{O}(g_{\epsilon} \tilde{d}_{\epsilon})$ & \makecell{Generally applicable}
    \end{tabular}
    \caption{Comparison of preparing real, definite parity $\ket{\Psi_f}$. $X_{\epsilon}$ indicates that $X$ depends on the error $\epsilon$. We instantiate $g_\epsilon$-bit amplitude oracles using the coherent arithmetic approaches of~\cite{haner2018optimizing,sanders2020Optimization} which use degree $\tilde{d}_\epsilon$ piecewise polynomial approximations.}
    \label{tab:MethodComparison}
\end{table*}

\paragraph*{Classical pre-computation.} The approximating polynomial $h(\cdot)$, which approximates $f(a\arcsin(\cdot))$, can be calculated using the Remez algorithm for minimax polynomials~\cite{remez1962general,fraser1965survey}, or via Taylor expansion. The requirement $|h(y)|_{\mathrm{max}}^{y \in [-1,1]} \leq  1$ ensures the QSVT circuit is unitary, regardless of the block-encoding to which it is applied, and may require multiplying the approximating polynomial by an approximate threshold function, to ensure that it is still less than 1 outside of the window $[-\sin(1), \sin(1)]$. We expect that this results in a modest increase in the degree of $h(y)$. Given the degree $d$ approximating polynomial $h(y)$, we can use efficient algorithms~\cite{chao2020finding, Haah2019product, dong2021efficient} to find the QSVT rotation angles.

Given a $\delta$-accurate approximating polynomial, the trace distance between $\ket{\Psi_{\tilde{f}}}$ and $\ket{\Psi_{f}}$ can be bounded as shown in Lemma~\ref{Lemma:ErrorBoundsRigorous}, using $\fillfrac{f}{N}$ \& $\fillfrac{\tilde{f}}{N}$. We can also use $\sum_x f(\bar{x}) \tilde{f}(\bar{x})$ to compute a tighter bound in practice (Appendix~\ref{App:TighterErrorAnalysis}). When $N$ is small, these terms can be evaluated directly, while when $N$ is large we approximate them by their continuous variants (e.g. $\fillfrac{f}{\infty}$).

\paragraph*{Comparison.} We contrast the scaling and features of our method with existing approaches that have rigorous error bounds in Table~\ref{tab:MethodComparison} (we do not compare against the heuristic matrix product state approach~\cite{garcia2021quantum,holmes2020efficient}, as it is unclear if it can achieve high accuracy).


\section{Applications}\label{Sec:Applications}

We apply our algorithm to prepare functions with important applications in quantum algorithms: Kaiser window and Gaussian functions. The Kaiser window function $W_\beta(x) =\frac{I_0(\beta\sqrt{1-x^2})}{I_0(\beta)}$ (where $I_0$ is the zeroth modified Bessel function of the first kind, see Appendix~\ref{App:Bessel}) can be used in quantum phase estimation (QPE)~\cite{berry2022quantifyingTDA,berry2025RapidInitial}. By preparing the QPE ancillas in this state, we can boost the success probability of QPE without (coherently) computing the median of multiple phase evaluations (see e.g.~\cite{Rall2021fastercoherent}). Gaussian states $f_\beta(x) = \exp(-\frac{\beta}{2}x^2)$ are widely used in quantum algorithms, e.g. in chemistry~\cite{kivlichan2017bounding, chan2022grid}, simulation of quantum field theories~\cite{jordan2012quantumfieldtheory,klco2021HierarchicalFieldTheory}, and finance~\cite{Chakrabarti2021thresholdquantum, stamatopoulos2020option}. In Appendix~\ref{App:GaussianComplexity} we prove the following theorem on the complexity of preparing Gaussian\footnote{Here $\beta$ should be thought of as $\frac{1}{\sigma^2}$, the inverse of the variance.} and Kaiser window states:

\begin{restatable}{theorem}{GaussianComplexity}\label{thm:GaussianComplexity}
	Let $f_\beta(x)$ be either $\exp(-\frac{\beta}{2}x^2)$ or $W_\beta(x)$. 
	If $\eps\in(0,\frac{1}{2})$ and $2^n\geq \sqrt{\beta}\geq 0$, then we can prepare the corresponding Gaussian / Kaiser window state on $n$ qubits up to $\eps$-precision with gate complexity
	\begin{align}\label{eq:KaiserComplexity}
		\bigO{ n  \sqrt[4]{\beta+1} \left(\beta+\log(1/\eps)\right)}.
	\end{align}
	For Gaussian states $f_\beta(x)=\exp(-\frac{\beta}{2}x^2)$ if $\beta \geq \log(1/\eps)$ this complexity can be further improved to 	
	\begin{align}\label{eq:GaussianComplexity}
		\bigO{n \log^{\frac{5}{4}}(1/\eps)}.
	\end{align}
\end{restatable}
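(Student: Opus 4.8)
The plan is to feed two ingredients into \Cref{thm:genericComplexity}: an upper bound on the degree $d_\delta$ of the approximating polynomial, supplied by \Cref{thm:taylor}, and an estimate of the filling-fraction $\fillfrac{\tilde f}{N}$, supplied by the $L_2$-integral of $f_\beta$. A preliminary remark keeps the degree bound clean. Since both $f_\beta$ are even and are prepared on $[-1,1]$, the block-encoded eigenvalues $y=\sin(\bar x)$ range over $[-\sin 1,\sin 1]$, which sits at distance $\nu_0:=1-\sin 1>0.15$ from $\pm 1$; hence I can invoke \Cref{thm:taylor} with a \emph{constant} $\nu=\nu_0$, giving $d_\delta=\bigO{\ln(B/\delta)}$ for $\delta=\eps\cdot\mathrm{Min}(\fillfrac{f}{N},\fillfrac{\tilde f}{N})$. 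Because $1/\fillfrac{f}{N}$ will only be polynomially large in $\beta$, one has $\ln(1/\delta)=\bigO{\log(1/\eps)+\log\beta}$, so the $\delta$-dependence never dominates.

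\emph{Main bound.} I would apply \Cref{thm:taylor} with $x_0=0$ and $b=\pi/2$, so that $\tfrac{2b}{\pi}\arcsin(y)=\arcsin(y)$ reproduces the required $h$. For the Gaussian, $\exp(-\tfrac{\beta}{2}x^2)=\sum_k\tfrac{(-\beta/2)^k}{k!}x^{2k}$ gives $\sum_k|a_k|b^k=\exp(\tfrac{\beta}{2}b^2)=\exp(\tfrac{\pi^2\beta}{8})$, so $\ln B=\bigO{\beta}$. For the Kaiser window the key observation is that $I_0$ is a power series in the \emph{square} of its argument, so the apparent branch points of $\sqrt{1-x^2}$ at $x=\pm1$ are removable and $W_\beta$ is actually entire; expanding $I_0(\beta\sqrt{1-x^2})=\sum_k\tfrac{(\beta/2)^{2k}}{(k!)^2}(1-x^2)^k$ and summing absolute values yields $\sum_k|a_k|b^k=I_0(\beta\sqrt{1+b^2})/I_0(\beta)$, whose logarithm is again $\bigO{\beta}$ via $I_0(z)\sim e^z/\sqrt{2\pi z}$. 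In both cases $d_\delta=\bigO{\beta+\log(1/\eps)}$. It then remains to estimate $\fillfrac{\tilde f}{N}$: here $\mx{f_\beta}=1$ (attained at $x=0$) and $\big(\twonm{f_\beta}^{[\infty]}\big)^2=\int_{-1}^{1}|f_\beta|^2\approx\sqrt{\pi/\beta}$ for large $\beta$ (for $W_\beta$ one uses $W_\beta(x)\approx e^{-\beta x^2/2}$ near the peak), so $1/\fillfrac{\tilde f}{N}=\bigO{(\beta+1)^{1/4}}$. Substituting into \Cref{thm:genericComplexity} gives $\bigO{n(\beta+1)^{1/4}(\beta+\log(1/\eps))}$.

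\emph{Improved Gaussian bound.} For $\beta\ge\log(1/\eps)$ the Gaussian is concentrated within the effective radius $a:=\sqrt{2\log(1/\eps)/\beta}\lesssim 1$; the grid points with $|\bar x|>a$ carry total weight $\bigO{\eps^2}$. I would therefore replace the uniform prior $\ket{+}^{\otimes n}$ by a uniform superposition over the central block of $\approx a N$ grid points (preparable with $\bigO{n}$ gates, and non-trivial because $2^n\ge\sqrt\beta$), dropping the negligible tails at a cost of $\bigO{\eps}$ in trace distance. On this block the eigenvalues obey $|y|\le\sin a$ and the restricted target is a Gaussian of effective sharpness $\beta a^2=\bigO{\log(1/\eps)}$; re-running the \Cref{thm:taylor} estimate on this narrowed window lowers the degree to $d_\delta=\bigO{\log(1/\eps)}$, while the filling-fraction improves to $1/\fillfrac{\tilde f}{N}=\bigO{\log^{1/4}(1/\eps)}$ since the support now fills a constant fraction of the block. \Cref{thm:genericComplexity} then yields $\bigO{n\log^{5/4}(1/\eps)}$.

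\emph{Main obstacle.} The coefficient bookkeeping is routine; the delicate step is the improved bound, where I must exhibit a single degree-$\bigO{\log(1/\eps)}$ polynomial that approximates the Gaussian on the narrow window $[-\sin a,\sin a]$ \emph{and} stays bounded by $1$ on the entire QSVT domain $[-1,1]$, and then verify that the tail-truncation error, the polynomial error $\delta$, the modified prior, and the amplitude amplification compose into a clean final bound of $\eps$. Checking the Kaiser filling-fraction asymptotics\,---\,and that the sub-leading Bessel corrections do not spoil $\ln B=\bigO{\beta}$\,---\,is the remaining point that needs care.
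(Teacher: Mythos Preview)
Your argument for the main bound \eqref{eq:KaiserComplexity} follows the paper's route: both feed \Cref{thm:taylor} with $b=\pi/2$ (giving $\ln B=\bigO{\beta}$) and a filling-fraction lower bound $\Omega((\beta+1)^{-1/4})$ into \Cref{thm:genericComplexity}. The paper makes the Kaiser estimates rigorous via the explicit inequality $I_0(2\beta)/I_0(\beta)\le e^\beta$ and the pointwise bound $W_\beta(x)\ge 1-\tfrac{\beta}{2}x^2$, rather than Bessel asymptotics and the heuristic $W_\beta\approx e^{-\beta x^2/2}$, but the structure is identical.

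For the improved bound \eqref{eq:GaussianComplexity} your sketch has a real gap. Restricting the prior to the central block while keeping the same $n$-qubit block-encoding does \emph{not} lower the degree that \Cref{thm:taylor} delivers: the constant $B$ there is the absolute-coefficient sum of the Taylor series of $g(y)=\exp\!\big(-\tfrac{\beta}{2}\arcsin^2 y\big)$, which remains $\exp(\bigO{\beta})$ regardless of which sub-interval of $y$ you care about. Shrinking the window to $|y|\le\sin a$ only replaces $(1-\nu)^T$ by $(\sin a)^T$ in the tail bound, giving $T\gtrsim\beta/\ln(1/\sin a)$\,---\,still not $\bigO{\log(1/\eps)}$ when $\beta\gg\log(1/\eps)$. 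The obstacle you correctly flag (a degree-$\bigO{\log(1/\eps)}$ polynomial accurate on $[-\sin a,\sin a]$ yet bounded by $1$ on all of $[-1,1]$) is therefore left unresolved by your sketch.

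The paper avoids this entirely. Rather than changing the prior, it observes that the target state is $\bigO{\eps}$-close to one supported on the central $\approx aN$ grid points, and that after the rescaling $x\mapsto x'\approx x/a$ this is precisely a Gaussian state with parameter $\beta'=\Theta(\log(1/\eps))$ on $n'\le n$ qubits (with adjusted $N'$). One then simply re-invokes the already-proven \eqref{eq:KaiserComplexity} with $\beta'$ in place of $\beta$. Because the rescaled problem uses a fresh block-encoding whose eigenvalues again fill $[-\sin 1,\sin 1]$, no narrow-window polynomial construction is needed and the obstacle disappears.
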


\paragraph*{Kaiser window state.} In the Kaiser window state the parameter $\beta$ controls the trade-off between the central-band width and side-band height when viewed in the Fourier domain. In \Cref{App:Taylor} we show that $W_\beta(\arcsin(\bar{x}))$ can be approximated by a degree $\mathcal{O}\left(\beta + \ln\left(\delta^{-1}\right)\right)$ polynomial on the interval $x \in [-\sin(1), \sin(1)]$, utilizing the fact that $W_\beta(x)$ has a well behaved Taylor series.
To bound the filling-fraction, we show in \Cref{App:Filling} that $W_\beta(x) \geq 1-\beta x^2/2$. By integrating the lower bound for $\beta \geq 2$ we get that $\int_{-1}^1 W_\beta(x)^2 dx \geq  \sqrt{2/\beta}$. Hence $\fillfrac{W_b}{\infty} \geq \beta^{-1/4}$. This lower bound appears tight in practice, matching the true value with 85-90\% accuracy. 
Putting these bounds together with \Cref{thm:genericComplexity} gives the stated complexity in Eq.~(\ref{eq:KaiserComplexity}). For application in phase estimation, we can relate $\beta$ to the probability of failure $\eta$ as $\beta \sim \ln\left(\eta^{-1}\right)$, and $n$ to the precision $\epsilon_\phi$ of phase estimation as $n \sim \log\left(\epsilon_\phi^{-1} \ln\left(\eta^{-1}\right) \right)$~\cite{berry2022quantifyingTDA}. Hence, our method scales polylogarithmically in all parameters. We are not aware of any prior work discussing the complexity of preparing the Kaiser window state (which is also omitted from~\cite{berry2022quantifyingTDA}) or of resource estimates for implementing an amplitude oracle of the Bessel function, that could be used for the black-box or adiabatic state preparation methods.


\paragraph*{Gaussian state.} The proof of \Cref{thm:GaussianComplexity} for the Gaussian case is completely analogous to the Kaiser window case above. The bound can be tightened by observing that Gaussian functions take values close to zero for large $x$ values, and so one can assume without loss of generality that $\beta=\bigO{\log(1/\eps)}$, see \Cref{App:GaussianComplexity}.


\begin{table}[!ht]
    \centering
    \begin{tabular}{c|c|c}
        Method & \makecell{\# Ancilla \\ qubits} & \makecell{\# $T$ / Toffoli \\ gates} \\ \hline
        QSVT-based (This work) & $3$ & $48,000$ \\ \hline
        \makecell{Piecewise-polynomial~\cite{haner2018optimizing}} & $168$ & $120,000$ \\ \hline
        Linear interpolation~\cite{sanders2020Optimization} & $189$ & $ 24,000 $ \\ \hline
        Bespoke gaussian~\cite{poirier2021efficient} & $141$ & $45,000$ \\
    \end{tabular}
    \caption{Resources to prepare a quantum state representing $\exp(-\beta x^2)$ with $\beta=10$ and $x \in [-1,1]$, using $n=16$ qubits, with a trace distance $\epsilon \leq 10^{-6}$. We compare our QSVT-based method against the black-box state preparation approach~\cite{wang2021fast} with three different amplitude oracles.} 
    \label{tab:ResourceComparisonGaussian}
\end{table}

\paragraph*{Resource Estimates.} In Table~\ref{tab:ResourceComparisonGaussian} we compare the resources\footnote{While the cost of our method is most naturally expressed in $T$ gates, previous approaches are more naturally expressed in terms of Toffoli gates. One can convert 4 $T$ gates to a Toffoli using an ancilla qubit~\cite{jones2013LowOverheadToff}, or we can implement two $T$ gates from a CCZ state (equiv. Toffoli) using a $T$ state catalyst ancilla~\cite{gidney2019efficientmagicstate}.} to prepare a Gaussian state with our QSVT-based method, against the resources when using the LCU-based black-box state preparation approach~\cite{wang2021fast} with 3 different amplitude oracles; the piecewise-polynomial oracle~\cite{haner2018optimizing}, the linear interpolation oracle~\cite{sanders2020Optimization} (which can be viewed as maximally streamlining the piecewise polynomial approach) and a bespoke oracle for Gaussians~\cite{poirier2021efficient}\footnote{The estimates for the bespoke gaussian amplitude oracle are an optimistic lower bound, as the resource estimates available in~\cite{poirier2021efficient} consider $n=13$, and target a more peaked gaussian with $\beta=100$ (which results in a lower cost than $\beta=10$).}. We give a high level discussion of the costs here, and refer to Appendix~\ref{App:ResourceEst} for additional details. We expect that these methods will be more efficient than other bespoke methods for Gaussians such as: the Kitaev-Webb (KW) method~\cite{kitaev2008wavefunction}, and the repeat-until-success approach of~\cite{Rattew2021efficient}. The KW method is similar in spirit to Grover-Rudolph~\cite{grover2002creating}, and was shown to produce higher gate counts than exponentially scaling (in $n$) state preparation techniques for modest $n \leq 16$, due to the costly amplitude oracle required~\cite{bauer2021practical}. The approach of~\cite{Rattew2021efficient} has a circuit depth of $\mathcal{O}(n^2 \cdot Poly(\epsilon^{-1}))$, with a large constant prefactor.

\paragraph*{QSVT-based approach.} As discussed in Appendix.~\ref{App:ResourceEst}, the $T$ cost of our approach can be approximated by 
\begin{equation}
    (2R+1)d(n+1)(0.57 \log_2((2R+1)d(n+1)/\epsilon_s) + 8.83).
\end{equation} 
where $R$ is the number of rounds of amplitude amplification, $d$ is the degree of the approximation polynomial used, and $\epsilon_s$ is the rotation synthesis error (taken as $10^{-7}$ here). An even parity $d=20$ polynomial suffices to achieve a trace distance of around $5.7 \times 10^{-7}$. We calculate that $R=2$ in this example.

\paragraph*{Black-box approach.} We lower bound the cost by only counting non-Clifford gates due to the amplitude oracle. Each round of amplitude amplification (again $R=2$) calls the oracle and its inverse once, plus one final additional call for uncomputing garbage~\cite{sanders2019black,wang2021fast}. In addition to the ancilla costs of the amplitude oracle, the black-box method requires $2\log(n)-1 = 7$ ancilla qubits~\cite{wang2021fast}, and it requires 1 additional qubit for exact amplitude amplification. In all amplitude oracles we target an $L_\infty$ error $< 10^{-7}$. We remark that it is possible to halve the number of rounds of amplitude amplification (and thus the gate count) using the (more complex) prior-enhanced variant of the black-box approach in~\cite[Sec.IV.D.2]{Bagherimehrab2022GaussianFieldTheory}.

\paragraph*{Comparison.} Our approach reduces the ancilla count by over an order of magnitude, and yields a similar gate count to the amplitude oracle-based methods. We can further reduce the gate count of our method using a modest cost of $n$ additional qubits by eliminating the block-encoding rotation gates. One option is to use addition with an $n$-qubit phase gradient catalyst (cost $4n$ $T$ gates~\cite{Gidney2018halvingcostof}). Another option is to use the $n$ ancilla qubits to block-encode $x$ rather than $\sin(x)$, using the comparison test approach in~\cite{sanders2019black} (cost $2n-1$ Toffoli gates). By tailoring the block-encoding to minimize certain metrics (e.g. 2 qubit gates in NISQ, non-Clifford gates in the error corrected computations) we can make our method architecture specific.

\section{Extensions}\label{Sec:Extensions}

\paragraph*{Priors.} We can incorporate the use of improved priors in our method (cf.~\cite{bausch2020fast}). By applying $U_{\tilde{f}}$ to $\ket{+}^{\otimes n}$, we are choosing a uniform prior, leading to the $1/\fillfrac{\tilde{f}}{N}$ rounds of amplitude amplification. We can instead prepare $\ket{000} \mathcal{N}_p^{-1} \sum_x p(\bar{x}) \ket{x}$ and block-encode a polynomial approximation of $f(\bar{x})/p(\bar{x})$. We require $\mathcal{O}\left(\mathcal{N}_f^{-1}\mathcal{N}_p \left|f/p\right|_{\mathrm{max}} \right)$ rounds of amplitude amplification. If the prior distribution can be prepared with low cost, has a similar normalization to $f(\bar{x})$, and there exists a similar degree approximation of $f(\bar{x})/p(\bar{x})$ as there is for $f(\bar{x})$, this can reduce the resources required.

\paragraph*{Non-smooth functions.} We can extend our method to functions with a modest number of discontinuities, which are typically pathological for QSVT-based methods. Our application to state preparation enables us to circumvent this issue using two possible techniques. The first route uses a coherent inequality test to entangle the register with a flag qubit (such that the flag qubit is $\ket{0}/\ket{1}$ for $x$ to the left/right of the discontinuity). We control the rotations of the QSVT-ancilla on the flag, applying a different QSVT polynomial to each part of the register. For $k$ discontinuities, this piecewise extension requires $(k+n)$ ancilla qubits and $2kn$ Toffoli gates for the inequality comparison (and its uncomputation), and replaces the rotations of the ancilla by $k$ controlled rotations. 

The second route is more resource efficient when the number of discontinuities is small. As above, we perform a coherent inequality test to flag states to the right of the discontinuity point. We can view the ancilla as enlarging our domain, from an $n$-bit representation, to an $(n+1)$-bit representation, while maintaining the grid spacing. This opens a gap at the discontinuity point, such that the quantum state has no support on computational basis states in the vicinity of the discontinuity. We can then replace the original, discontinuous function by a continuous function that has the desired behaviour outside of the `gap' opened by the inequality test. Once the function has been applied, we can close the gap by uncomputing the inequality test. In exchange for the added complexity of block-encoding the function in wider range, we can replace the non-analytic function $\tilde{f}(\bar{x})$ with a continuously differentiable approximation, requiring a substantially lower degree polynomial.


\paragraph*{Fourier series.} Our method is naturally compatible with `Fourier-based quantum eigenvalue transformation'~\cite{silva2022fourier, dong2022ground} which provides a complementary approach for function approximation through Fourier series. In that approach, the block-encoding of $A$ is replaced by controlled time evolution $U(A) := \ket{0}\bra{0} \otimes I + \ket{1}\bra{1} \otimes e^{iAt}$, efficiently implementable for diagonal $A = \sum_x \bar{x} \ketbra{x}{x}$ using a controlled-phase-gradient operation~\cite{GidneyBlog2017}. Our methods are particularly appealing for functions with a compact Fourier series, such as spherical harmonic functions in chemistry.


\section{Outlook}
\paragraph*{Conclusion.} We have introduced a QSVT-based approach to preparing quantum states that represent continuous functions with polynomial approximations. By circumventing the coherent arithmetic instantiated amplitude oracle typically used, we can significantly reduce the number of ancilla qubits required. Our approach uses the same circuit template for all suitable functions, in contrast to the bespoke circuits typically developed as amplitude oracles. We have shown how to prepare Gaussian and Kaiser window functions with lower complexity than prior state-of-the-art approaches. We expect our technique to prove useful in a wide range of quantum algorithms, including those for chemistry and physics simulation, phase estimation, finance, and differential equation solving --- indeed it has already shown utility in these latter three applications~\cite{chen2023gibbs,stamatopoulos2023derivativeQSP, Li2023efficientquantum} and has been incorporated as an example in the open-source qsppack package~\cite{qsppackexamples2024}. 


\paragraph*{Multivariate functions.} A straightforward multivariate extension of our approach would use linear combinations /products of block-encodings~\cite{gilyen2018quantum} to implement a function $f(x,y)$ with a series expansion in powers of $x,y$. The expansion coefficients (which determine the final normalization of the block-encoding and thus the number of rounds of amplitude amplification) can be much smaller in the Fourier basis than in the polynomial basis. A potentially more efficient route to generate a multivariate function $f(\vec{x})$ may be to use the recently introduced multivariable-QSP~\cite{rossi2022multivariable}. Nevertheless, characterizing the functions that can be implemented via M-QSP is still an ongoing area of research~\cite{nemeth2023variants}. It is also unclear how to address the expected exponential decay of filling-fraction with dimension for multivariate functions.


\paragraph*{Acknowledgements.} We thank Fernando Brand\~ao for discussions and support throughout the project. A.G. acknowledges funding from the AWS Center for Quantum Computing. M.B. is supported by the EPSRC (Grant number EP/W032643/1).

\bibliography{Bib}


\appendix

\section{Signed integer representation}\label{App:SignedIntegers}
In this work we use the two's complement representation of signed integers. Using $n$ bits, we use the first (rightmost) $n-1$ bits to represent numbers from $0$ to $2^{n-1} - 1$. E.g. for $n-1=3$ we can represent the numbers from $0 = \ket{000}$ to $7 = \ket{111}$. The leftmost bit is used to control the sign as follows. If the $n$-th bit is in $\ket{0}$, the number represented by the rest of the binary string is unchanged. If the $n$-th bit is in $\ket{1}$, then we subtract $2^{n-1}$ from the number represented by the rest of the binary string. Hence, for $n=4$, $\ket{0000} = 0$, $\ket{0111} = 7$, $\ket{1000}=-8$, $\ket{1111}=-1$. Hence we can represent the $2^n$ integers between $-2^{n-1}$ and $2^{n-1} - 1$.

\section{Exact amplitude amplification}\label{App:AmpAmp}

In this appendix we describe exact amplitude amplification. This result is folklore, but we could not find a standard reference, especially one that treats the case when the amplitude is only approximately known, so we give a full treatment here. 

We utilize Chebyshev polynomials of the first kind defined as $T_{n}(x)=\cos(n\arccos(x))$, and their recurrence relation $T_{n+1}(x)=2x T_n(x)-T_{n-1}(x)$.

\begin{lemma}[Amplitude amplification]\label{lem:AmpAmp}
	Let $U$ be an $n$-qubit unitary, $\Pi$ an $n$-qubit projector, $\ket{\psi}$ an $n$-qubit (normalized) quantum state, and $a\geq 0$ such that
	\begin{align}\label{eq:stateprep}
		 \Pi U  \ket{\bar{0}} = a \ket{\psi},
	\end{align}  
	where $\ket{\bar{0}}$ denotes some $n$-qubit initial state.
	
	Let $W=U\left(2\ketbra{\bar{0}}{\bar{0}}-I\right)U^\dagger\left(2\Pi-I\right)$, then
	\begin{align}
		\Pi W^k U  \ket{\bar{0}} & = T_{2k+1}(a) \ket{\psi}, \quad \text{and}\label{eq:Cheby1}	\\	
		\bra{\bar{0}}U^\dagger(2\Pi-I) W^k U  \ket{\bar{0}} & = T_{2k+2}(a). \label{eq:Cheby2}
	\end{align} 
\end{lemma}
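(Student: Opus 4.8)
The plan is to reduce everything to the two-dimensional invariant subspace in which the amplitude-amplification dynamics take place, where $W$ acts as a product of two reflections, hence a rotation, and then to read off the two Chebyshev identities from the angle of that rotation. First I would note that since $U$ is unitary and $\ket{\psi}$ is normalized, $\|\Pi U\ket{\bar{0}}\| = a \leq \|U\ket{\bar{0}}\| = 1$, so $a\in[0,1]$ and I may write $a=\cos\theta$ with $\theta\in[0,\pi/2]$. Applying $\Pi$ to \eqref{eq:stateprep} and using $\Pi^2=\Pi$ gives $a\,\Pi\ket{\psi}=a\ket{\psi}$, so $\Pi\ket{\psi}=\ket{\psi}$ whenever $a\neq 0$. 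I then decompose $U\ket{\bar{0}} = \cos\theta\,\ket{\psi} + \sin\theta\,\ket{\psi^\perp}$, where $\ket{\psi^\perp}$ is the normalized component in the kernel of $\Pi$, i.e. $\Pi\ket{\psi^\perp}=0$, and let $S=\mathrm{span}\{\ket{\psi},\ket{\psi^\perp}\}$.

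The key structural observation is that both factors of $W$ preserve $S$ and act as reflections on it. Writing $R_1 := 2\Pi-I$ and $R_2 := U(2\ketbra{\bar{0}}{\bar{0}}-I)U^\dagger = 2\ketbra{U\bar{0}}{U\bar{0}}-I$, I would check that in the ordered orthonormal basis $\{\ket{\psi},\ket{\psi^\perp}\}$ of $S$ one has $R_1=\mathrm{diag}(1,-1)$ (reflection about the $\ket{\psi}$ axis), while $R_2$ is reflection about the unit vector $U\ket{\bar{0}}\in S$, which sits at angle $\theta$ from $\ket{\psi}$. A direct $2\times 2$ computation then shows that $W=R_2R_1$ restricted to $S$ is the rotation by $+2\theta$, and the orthogonal complement $S^\perp$ plays no role since $\Pi U\ket{\bar{0}}$, $U\ket{\bar{0}}$, and $\ket{\psi}$ all lie in $S$.

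With $W$ identified as rotation by $2\theta$, the conclusions are immediate. Since $U\ket{\bar{0}}$ sits at angle $\theta$, the vector $W^kU\ket{\bar{0}}$ sits at angle $(2k+1)\theta$, i.e. $W^kU\ket{\bar{0}} = \cos((2k{+}1)\theta)\ket{\psi} + \sin((2k{+}1)\theta)\ket{\psi^\perp}$. Projecting with $\Pi$ kills the $\ket{\psi^\perp}$ component and yields $\Pi W^kU\ket{\bar{0}} = \cos((2k{+}1)\theta)\ket{\psi} = T_{2k+1}(a)\ket{\psi}$, using $T_m(\cos\theta)=\cos(m\theta)$, which is \eqref{eq:Cheby1}. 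For \eqref{eq:Cheby2} I apply $R_1=2\Pi-I$ (flipping the sign of the $\ket{\psi^\perp}$ component) and take the overlap with $U\ket{\bar{0}}=\cos\theta\ket{\psi}+\sin\theta\ket{\psi^\perp}$; the cosine angle-addition formula $\cos\theta\cos((2k{+}1)\theta)-\sin\theta\sin((2k{+}1)\theta)=\cos((2k{+}2)\theta)$ gives exactly $T_{2k+2}(a)$.

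The main obstacle is bookkeeping rather than conceptual: I must fix the orientation and sign conventions of the two reflections consistently so that $W$ comes out as rotation by $+2\theta$ (and not $-2\theta$), since a sign error would still produce a Chebyshev polynomial but shift the argument's parity. I would also treat the degenerate endpoints separately: when $a=1$ ($\theta=0$) the subspace collapses to $\mathrm{span}\{\ket{\psi}\}$, $U\ket{\bar{0}}=\ket{\psi}$ is fixed by both reflections, and the claims reduce to $T_{2k+1}(1)=T_{2k+2}(1)=1$; when $a=0$ both sides vanish, consistent with $T_{2k+1}(0)=0$ and $T_{2k+2}(0)=(-1)^{k+1}$ paired against a zero overlap. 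As a cross-check (and an alternative to the geometric argument), one can instead prove both identities simultaneously by induction on $k$, feeding the two-term relation between consecutive iterates into the recurrence $T_{m+1}(x)=2xT_m(x)-T_{m-1}(x)$ quoted before the lemma.
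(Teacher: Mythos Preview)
Your argument is correct but proceeds along a genuinely different route from the paper. The paper gives a purely algebraic proof by induction on $k$: it verifies the base case $k=0$ directly from \eqref{eq:stateprep}, then unfolds one factor of $W$ and uses the recurrence $T_{m+1}(x)=2xT_m(x)-T_{m-1}(x)$ to step from $(T_{2k+1},T_{2k+2})$ to $(T_{2k+3},T_{2k+4})$. You instead set up the two-dimensional invariant subspace, identify $W$ as a rotation by $2\theta$ with $a=\cos\theta$, and read off the identities from $T_m(\cos\theta)=\cos(m\theta)$ and the angle-addition formula. This is the standard ``Grover rotation'' picture; it is more geometric and arguably more illuminating, but it obliges you to fix orientation conventions carefully and to treat the degenerate endpoints $a\in\{0,1\}$ separately, whereas the paper's induction handles all $a$ uniformly in a few lines. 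Amusingly, the alternative you sketch in your final sentence---simultaneous induction using the Chebyshev recurrence---is exactly the paper's proof.

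One small wrinkle: your remark on the $a=0$ case is garbled. The first identity is indeed $0=0$, but the second does not vanish: one has $(2\Pi-I)U\ket{\bar{0}}=-U\ket{\bar{0}}$, hence $W^kU\ket{\bar{0}}=(-1)^kU\ket{\bar{0}}$, and the left side of \eqref{eq:Cheby2} equals $(-1)^{k+1}=T_{2k+2}(0)$. The phrase ``paired against a zero overlap'' does not capture this; you should just compute it directly as above.
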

\begin{proof}
	\Cref{eq:Cheby1,eq:Cheby2} follow for $k=0$ from \eqref{eq:stateprep} using that $T_1(x)=x$ and $T_2(x)=2x^2-1$.
	
	We prove them for positive values of $k$ by induction:
	\begin{align*}
		\Pi W^{k+1} U  \ket{\bar{0}}&=\Pi U\left(2\ketbra{\bar{0}}{\bar{0}}-I\right)U^\dagger\left(2\Pi-I\right)W^{k} U  \ket{\bar{0}}\\&
		=\left(2a\ketbra{\psi}{\bar{0}}-\Pi U\right)U^\dagger\left(2\Pi-I\right)W^{k} U  \ket{\bar{0}}\\&
		=2a\ketbra{\psi}{\bar{0}}U^\dagger\left(2\Pi-I\right)W^{k} U \ket{\bar{0}}-\Pi W^{k} U \ket{\bar{0}}\\&		
		=\left(2a T_{2k+2}(a) - T_{2k+1}(a)\right)\ket{\psi}\\&	
		=T_{2k+3}(a)\ket{\psi},
	\end{align*}
	and
	\begin{align*}
		&\bra{\bar{0}}U^\dagger(2\Pi-I) W^{k+1} U  \ket{\bar{0}}\\&
		=2\bra{\bar{0}}U^\dagger\Pi W^{k+1} U  \ket{\bar{0}} - \bra{\bar{0}}U^\dagger W^{k+1} U  \ket{\bar{0}}\\&
		=2\bra{\bar{0}}U^\dagger\Pi\Pi W^{k+1} U  \ket{\bar{0}} - \bra{\bar{0}}U^\dagger W^{k+1} U  \ket{\bar{0}}\\&		
		=2a T_{2k+3}(a) - \bra{\bar{0}}U^\dagger W^{k+1} U  \ket{\bar{0}}\\&			
		=2a T_{2k+3}(a) - \bra{\bar{0}}U^\dagger U\left(2\ketbra{\bar{0}}{\bar{0}}-I\right)U^\dagger\left(2\Pi-I\right) W^{k} U  \ket{\bar{0}}\\&	
		=2a T_{2k+3}(a) - \bra{\bar{0}}U^\dagger\left(2\Pi-I\right) W^{k} U  \ket{\bar{0}}\\&					
		=2a T_{2k+3}(a) - T_{2k+2}(a)\\&		
		=T_{2k+4}(a).\qedhere
	\end{align*}	
\end{proof}

\begin{theorem}[Exact amplitude amplification]\label{thm:ExactAmpAmp}
	Suppose $U$, $\Pi$, $\ket{\psi}$, $\ket{\bar{0}}$, and $a$ are as in \Cref{lem:AmpAmp}. 
	Let $k:=\left\lceil\frac{\pi}{4\arcsin(a)}-\frac12\right\rceil$, and let $\theta:=\frac{\pi}{4k+2}$. Suppose that $R$ is a single-qubit unitary such that $\bra{0}R\ket{0}=\frac{\sin(\theta)}{a}$. Let us define $U':=R\otimes U$ and 
	\begin{align*}
		W':=U'\left(2\ketbra{0}{0}\otimes\ketbra{\bar{0}}{\bar{0}}-I\right)U'^\dagger\left(I-2\ketbra{0}{0}\otimes\Pi\right),
	\end{align*}
	then
	\begin{align}
		(W')^k U'  \ket{0}\ket{\bar{0}} = \ket{0}\ket{\psi}.\label{eq:ExCheby}
	\end{align} 
	Moreover, if $\tilde{a}\leq 2a< 2$ and $\tilde{U}$ is such that 	
	\begin{align}\label{eq:stateprepűtilde}
		\Pi \tilde{U}  \ket{\bar{0}} = \tilde{a} \ket{\tilde{\psi}},
	\end{align} 
	then
	\begin{align}
		\left(\ketbra{0}{0}\otimes\Pi\right) (\tilde{W}')^k \left(R\otimes\tilde{U}\right)  \ket{0}\ket{\bar{0}} = c\ket{0}\ket{\tilde{\psi}},\label{eq:ExChebyTilde}
	\end{align}
	for some $c\geq 1-(2k+1)(2k+2)|\tilde{a}-a|^2$, where $\tilde{W}'$ is defined analogously to $W'$ just $U'$ is replaced by $R\otimes\tilde{U}$.
\end{theorem}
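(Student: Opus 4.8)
The plan is to reduce both claims to the amplitude-amplification identity of \Cref{lem:AmpAmp}, applied not to $(U,\Pi,\ket{\bar{0}},a)$ but to the enlarged data $(U',\,\ketbra{0}{0}\otimes\Pi,\,\ket{0}\ket{\bar{0}})$, where the single-qubit rotation $R$ serves only to reset the effective success amplitude to the convenient value $\sin\theta$. Concretely, since $\bra{0}R\ket{0}=\sin(\theta)/a$ and $\Pi U\ket{\bar{0}}=a\ket{\psi}$, projecting $U'\ket{0}\ket{\bar{0}}=(R\ket{0})\otimes(U\ket{\bar{0}})$ onto the good subspace $\ketbra{0}{0}\otimes\Pi$ yields $\sin(\theta)\ket{0}\ket{\psi}$, i.e. effective amplitude $\sin\theta$. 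The reflections in $W'$ are exactly those of \Cref{lem:AmpAmp} for this enlarged data: the factor $I-2\ketbra{0}{0}\otimes\Pi$ is the negative of the good-subspace reflection, so $W'=-\tilde W$ for the corresponding walk $\tilde W$, contributing only an overall sign $(-1)^k$ after $k$ steps (and the inverse is read as $(U')^{\dagger}$, so that $R$ is properly uncomputed). I would first verify that $k=\lceil\tfrac{\pi}{4\arcsin(a)}-\tfrac12\rceil$ guarantees $\theta=\tfrac{\pi}{4k+2}\leq\arcsin(a)$, hence $\sin\theta\leq a$, so that $R$ exists as a unitary.

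For \Cref{eq:ExCheby} I would invoke \Cref{eq:Cheby1} with $a\rightarrow\sin\theta$, so that the good component of $(W')^k U'\ket{0}\ket{\bar{0}}$ equals $\pm T_{2k+1}(\sin\theta)\ket{0}\ket{\psi}$. Writing $\sin\theta=\cos(\tfrac\pi2-\theta)$ and using $T_{2k+1}(\cos\varphi)=\cos((2k+1)\varphi)$ together with $(2k+1)(\tfrac\pi2-\theta)=k\pi$ (exactly what $\theta=\tfrac{\pi}{4k+2}$ encodes), this value is $\cos(k\pi)=(-1)^k$. Combined with the $(-1)^k$ from the $k$ reflections, the good component is precisely $\ket{0}\ket{\psi}$, which has unit norm; since $(W')^k U'$ is unitary and the input is normalized, the output has no component outside the good subspace, giving the exact equality.

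For \Cref{eq:ExChebyTilde} the same reduction applies verbatim to $(R\otimes\tilde U,\ldots)$, except $R$ is still calibrated to $a$ while the true amplitude is $\tilde a$; the effective amplitude is $\tfrac{\sin\theta}{a}\tilde a=:\sin\tilde\theta$, and \Cref{lem:AmpAmp} gives $c=(-1)^k T_{2k+1}(\sin\tilde\theta)=\cos((2k+1)(\theta-\tilde\theta))$ by the same identity, whence $1-c\leq\tfrac12(2k+1)^2(\theta-\tilde\theta)^2$ from $1-\cos(x)\leq x^2/2$. The remaining task is to convert the angle error into the amplitude error: I would use $|\sin\tilde\theta-\sin\theta|=\tfrac{\sin\theta}{a}|\tilde a-a|\leq|\tilde a-a|$ (since $\sin\theta\leq a$), and then exploit that $\sin\theta$ is an extremum of $T_{2k+1}$ (so the linear term vanishes and the error is genuinely quadratic) via the second-order Chebyshev estimate near the extremum to absorb the constant into $(2k+1)(2k+2)$. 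The hypotheses $\tilde a\leq 2a<2$ ensure $k\geq1$ (so $\theta\leq\pi/6$ and $\cos^2\theta\geq 3/4$) and $\sin\tilde\theta\leq1$ (so $\tilde\theta$ is well defined); the bound is permitted to be vacuous when $|\tilde a-a|$ is large.

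The main obstacle is this last step: honestly tracking the constant in the quadratic perturbation bound. A naive Lipschitz estimate of $T_{2k+1}$ gives only an error linear in $|\tilde a-a|$, while a global bound on $T_{2k+1}''$ is $O(k^4)$ and far too lossy; the quadratic-in-$k$ constant $(2k+1)(2k+2)$ emerges only by using that the operating point $\sin\theta$ is exactly an extremum of $T_{2k+1}$ (killing the first-order term) and that $\theta$ is small enough that the relevant second derivative is $O((2k+1)^2)$ rather than $O(k^4)$. Making the inequality hold for all admissible $\tilde a$, not just infinitesimally close to $a$, while keeping the clean stated constant, is the delicate part; by comparison the sign bookkeeping of the first part is routine.
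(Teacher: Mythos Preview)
Your reduction of both parts to \Cref{lem:AmpAmp} with the enlarged data $(U',\,\ketbra{0}{0}\otimes\Pi,\,\ket{0}\ket{\bar 0},\,\sin\theta)$ is exactly the paper's route, including the sign bookkeeping $(-1)^kT_{2k+1}(\sin\theta)=1$. Your norm observation---that unit length of the projected component forces the unprojected equality \eqref{eq:ExCheby}---actually fills a step the paper leaves implicit.

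For the perturbed part you correctly identify the mechanism (Taylor-expand $T_{2k+1}$ at its extremum $\sin\theta$, so the first-order term vanishes, then bound the second derivative locally by $O((2k+1)^2)$ rather than $O(k^4)$) and correctly flag the constant $(2k+1)(2k+2)$ as the hard part. The paper resolves precisely this obstacle by a case split on $k$: for $k\geq 3$ one has $2\sin\theta\leq 2\sin(\pi/14)<\tfrac12$, and on $[-\tfrac12,\tfrac12]$ the closed form
\[
|T_n''(y)|=n\left|\frac{n\cos(n\arccos y)\sqrt{1-y^2}-y\sin(n\arccos y)}{(1-y^2)^{3/2}}\right|\leq \frac{n(n+1)}{(1-y^2)^{3/2}}\leq 2n(n+1)
\]
gives $M_2\leq 2(2k+1)(2k+2)$ directly; for $k=1,2$ the bound on $[-2\sin\theta,2\sin\theta]$ is checked by hand from $T_3''(y)=24y$ and $T_5''(y)=320y^3-120y$. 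Your angle-variable identity $c=\cos\bigl((2k+1)(\theta-\tilde\theta)\bigr)$ is a clean alternative starting point, but the follow-up step---turning $|\theta-\tilde\theta|$ into $|\tilde a-a|$ via the mean value theorem on $\sin$---breaks down exactly at $k=1$, where $\tilde\theta$ can reach $\pi/2$ (take $a=\tfrac12$, $\tilde a\to 1$) and the MVT factor $1/\cos\theta_*$ is unbounded; so that route would also require a separate small-$k$ argument. Either way, a case analysis seems unavoidable if you want to land on the exact constant $(2k+1)(2k+2)$ rather than an unspecified $O((2k+1)^2)$.
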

\begin{proof}
	First note that 
	\begin{align*}
		\theta=\!\frac{\pi}{4\left\lceil\!\frac{\pi}{4\arcsin(a)\!}\!-\!\frac12\!\right\rceil\!+\!2}
		\!\leq\! \frac{\pi}{4\left(\!\frac{\pi}{4\arcsin(a)}\!-\!\frac12\!\right)\!+\!2}
		\!=\arcsin(a),
	\end{align*}
	and therefore $\bra{0}R\ket{0}=\frac{\sin(\theta)}{a}\leq 1$. Observe that
	$\left(\ketbra{0}{0}\!\otimes\!\Pi\right) U'  \ket{0}\ket{\bar{0}} \!=\! \left(\ket{0}\bra{0}R\ket{0}\right)\!\otimes\! \left(\Pi U  \ket{\bar{0}}\right)\!=\!\sin(\theta)\ket{0}\ket{\psi}$.
	Applying \Cref{lem:AmpAmp} with $U'$, $\Pi':=\ketbra{0}{0}\otimes\Pi$, $\ket{\psi'}:=\ket{0}\ket{\psi}$, $\ket{\bar{0}'}:=\ket{0}\ket{\bar{0}}$, and $a':=\sin(\theta)$ we get that 
	\begin{align*}
		\Pi' (-W')^k U'\ket{\bar{0}'} = T_{2k+1}(\sin(\theta)) \ket{\psi'},
	\end{align*}
	thus
	\begin{align*}
		\Pi' (W')^k U'\ket{\bar{0}'} &= (-1)^k T_{2k+1}(\sin(\theta)) \ket{\psi'}
		=\ket{\psi'},
	\end{align*}
	where the last equality holds because
	\begin{align*}
		(-1)^k T_{2k+1}(\sin(\theta)) 
		&=(-1)^k \cos((2k+1)\arccos(\sin(\theta))) \\&
		=(-1)^k \cos((2k+1)(\pi/2-\theta)) \\&
		=(-1)^k \cos(k\pi) = 1.
	\end{align*}

	Similarly, by \Cref{lem:AmpAmp} we get that
	\begin{align*}
		\Pi' (\tilde{W}')^k \left(R\otimes\tilde{U}\right)\ket{\bar{0}'} &= (-1)^k T_{2k+1}\left(\sin(\theta)\frac{\tilde{a}}{a}\right) \ket{0}\ket{\tilde{\psi}}.
	\end{align*}
	As we have seen $(-1)^k T_{2k+1}\left(y\right)$ takes value $1$ at $y=\sin(\theta)$, which also implies that its derivative is $0$ there since $|T_{2k+1}\left(y\right)|\leq 1$ for all $y\in [-1,1]$ and $\sin(\theta)<1$ (as $a<1$). By Taylor's theorem we have that $(-1)^k T_{2k+1}\left(\sin(\theta) + \xi\right)\geq 1 - \frac{M_2}{2}\xi^2$, where $M_2$ is the maximal absolute value of the second derivative of $T_{2k+1}\left(y\right)$ at any point between $\sin(\theta)$ and $\sin(\theta+\xi)$. 
	Observe that $|\sin(\theta)\frac{\tilde{a}}{a}-\sin(\theta)|=\frac{\sin(\theta)}{a}|\tilde{a}-a|\leq |\tilde{a}-a|$ so in Taylor's theorem we can bound $|\xi|\leq |\tilde{a}-a|$.
	
	If $\tilde{a}\leq 2a$ then $\max\{\sin(\theta),\sin(\theta)\frac{\tilde{a}}{a}\}\leq 2 \sin(\theta)$, so it suffices to bound the magnitude of the second derivative $|T_{2k+1}''(y)|$ for $y\in [-2\sin(\theta),2\sin(\theta)]$.
	If $a\in[ \frac12,1)$, then $k=1$ and $|T_{3}''\left(y\right)|=|24 y|\leq 2(2k+1)(2k+2)$ so $M_2\leq 2(2k+1)(2k+2)$.
	If $a\in[\sin(\pi/10), \frac12)$, then $k=2$ and $|T_{5}''\left(y\right)|=|320 y^3-120y|\leq (2k+1)(2k+2)$ for $y\in[-2\sin(\pi/10),2\sin(\pi/10)]$ so $M_2\leq (2k+1)(2k+2)$.
	Finally, for $a < \sin(\pi/10)$ we have $k\geq 3$ and $2\sin(\theta)\leq 2\sin(\pi/14) < 0.45$. 
	Considering $\alpha:=n \arccos(y)$ and $y\in[-1,1]$ we have $|T_{n}''\left(y\right)|=n\left|\frac{n\cos(\alpha)\sqrt{1-y^2}-y\sin(\alpha)}{\left(1-y^2\right)^{\frac32}}\right|
	\leq \frac{n(n+1)}{{\left(1-y^2\right)^{\frac32}}}$ 
	which is $\leq 2 n(n+1)$ for $y\in[-\frac12,\frac12]$.
	This completes the case separation and proves that $M_2/2\leq (2k+1)(2k+2)$ implying that $c\geq 1-(2k+1)(2k+2)|\tilde{a}-a|^2$.
\end{proof}

\subsection{Working with approximately known amplitudes}\label{AppSub:ApproxAmps}

We discuss how best to amplify the state in cases where we do not know the exact value of $\fillfrac{\tilde{f}}{N}$. This may arise because the value $n$ is so large that it would be too costly to classically compute the filling fraction. If we have a lower bound for $\fillfrac{\tilde{f}}{N}$, then we can simply apply fixed-point amplitude amplification, using QSVT~\cite{gilyen2018quantum}. This also only uses a single additional ancilla qubit\footnote{For the implementation of the generalized Toffoli required for the reflection around the all-$0$ initial state we might need an additional second ancilla qubit.\label{foot:ToffoliAncilla}} and increases the success probability to $\geq (1-\zeta)$ at the cost of a multiplicative overhead of $\bigO{\log\left( \zeta^{-1} \right)}$.

If $n$ is sufficiently large, it is possible to approximate the value of $\fillfrac{\tilde{f}}{N}$ by its continuous counterpart $\fillfrac{\tilde{f}}{\infty}$ or $\fillfrac{f}{\infty}$, c.f. \Cref{Apx:genDiscBounds}, which is efficient to evaluate for many functions. Assuming that $\left|\fillfrac{\tilde{f}}{\infty} - \fillfrac{\tilde{f}}{N} \right| \leq \delta \leq \fillfrac{\tilde{f}}{\infty}$, we can apply \Cref{thm:ExactAmpAmp} for bounding the error in the resulting amplitude by \[\bigO{\bigg(\frac{\delta}{\fillfrac{\tilde{f}}{\infty}}\bigg)^{\!2}}.\]
As the approximation error $\delta$ decreases exponentially with the number of qubits $n$ used for discretizing the function, we expect this error to be small.

\subsection{General discretization error bounds}\label{Apx:genDiscBounds}

Here we recall some standard results on Riemann sums. The first result considers our default discretization method but has a looser bound, while the second improves upon it but requires a slightly different placing of the discrete points.

\begin{lemma}[see \cite{grinshpan2009AnalysisNotes}]
	Suppose that $f\colon [a,b]\rightarrow \mathbb{R}$ is continuously differentiable. 
	Let $\bar{x} = \left( (b-a)x/N + a\right)$, then
	\begin{align*}
		\left|\frac{b-a}{N}\sum_{x=0}^{N-1} f(\bar{x})-\int_a^b f(x) dx\right|\leq \frac{(b-a)^2}{2N}|f'(x)|_{\mathrm{max}}^{x \in [a,b]}.
	\end{align*}
\end{lemma}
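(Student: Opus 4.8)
The plan is to decompose the global error into a sum of local errors over the $N$ equal subintervals of the partition, bound each local error using the continuous differentiability of $f$, and then sum. Write $h := (b-a)/N$ so that the sample points $\bar{x}_k = a + kh$ are exactly the left endpoints of the subintervals, and observe that $\frac{b-a}{N}\sum_{x=0}^{N-1} f(\bar{x}) = \sum_{k=0}^{N-1} h\, f(\bar{x}_k)$ is precisely the left Riemann sum. Since $\int_a^b f = \sum_{k=0}^{N-1}\int_{\bar{x}_k}^{\bar{x}_{k+1}} f(x)\,dx$ and $h f(\bar{x}_k) = \int_{\bar{x}_k}^{\bar{x}_{k+1}} f(\bar{x}_k)\,dx$, the quantity to bound rewrites as $\left|\sum_k \int_{\bar{x}_k}^{\bar{x}_{k+1}}\bigl(f(x)-f(\bar{x}_k)\bigr)\,dx\right|$, which by the triangle inequality is at most $\sum_k \int_{\bar{x}_k}^{\bar{x}_{k+1}} |f(x)-f(\bar{x}_k)|\,dx$.

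The key local estimate invokes the fundamental theorem of calculus: for $x\in[\bar{x}_k,\bar{x}_{k+1}]$ we have $f(x)-f(\bar{x}_k)=\int_{\bar{x}_k}^x f'(t)\,dt$, so writing $M:=|f'(x)|_{\mathrm{max}}^{x \in [a,b]}$ yields the linear bound $|f(x)-f(\bar{x}_k)|\leq (x-\bar{x}_k)\,M$. Integrating this over a single subinterval gives $\int_{\bar{x}_k}^{\bar{x}_{k+1}}(x-\bar{x}_k)\,M\,dx = M h^2/2$, the same value for every $k$.

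Summing the $N$ identical per-subinterval bounds produces a total of $N\cdot M h^2/2$, and substituting $h=(b-a)/N$ collapses this to $\frac{(b-a)^2}{2N}\,M$, which is exactly the claimed bound.

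The main obstacle, to the extent there is one in this classical estimate, is ensuring the local linearization error is controlled by the first derivative rather than by crude endpoint differences; the factor of $1/2$ arises precisely from integrating the distance $x-\bar{x}_k$ exactly rather than bounding it by its maximum $h$, which would lose a factor of two. If one wanted the sharper midpoint-type estimate referenced in the following lemma, the natural modification would be to center the sample points and expand to second order so that the linear term cancels.
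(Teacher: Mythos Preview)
Your proof is correct and is the standard argument for this classical Riemann-sum error bound. Note that the paper itself does not supply a proof of this lemma; it merely states the result with a citation to external analysis notes, so there is no paper proof to compare against.
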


\begin{lemma}[see \cite{grinshpan2009AnalysisNotes}]
	Suppose that $f\colon [a,b]\rightarrow \mathbb{R}$ is twice continuously differentiable. 
	Let $\bar{x} = \left( (b-a)(x+\frac{1}{2})/N + a\right)$, then
	\begin{align*}
		\left|\frac{b-a}{N}\sum_{x=0}^{N-1} f(\bar{x})-\int_a^b f(x) dx\right|\leq \frac{(b-a)^3}{24N^2}|f''(x)|_{\mathrm{max}}^{x \in [a,b]}.
	\end{align*}
\end{lemma}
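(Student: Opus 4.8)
The plan is to recognize this as the classical error estimate for the composite midpoint rule and to prove it by reducing the global error to a single-subinterval (local) bound. First I would set $h := (b-a)/N$ and partition $[a,b]$ into the $N$ subintervals $I_x := [a+xh,\, a+(x+1)h]$ for $x=0,\dots,N-1$. The point $\bar{x} = (b-a)(x+\tfrac12)/N + a = a + (x+\tfrac12)h$ is precisely the midpoint of $I_x$, and the left-hand sum $\frac{b-a}{N}\sum_x f(\bar{x}) = h\sum_x f(\bar{x})$ is exactly the composite midpoint-rule approximation of $\int_a^b f$, with each term $h\,f(\bar{x})$ approximating $\int_{I_x} f(t)\,dt$.

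Next I would split the global error across subintervals and apply the triangle inequality,
\begin{align*}
\left|\frac{b-a}{N}\sum_{x=0}^{N-1} f(\bar{x}) - \int_a^b f(t)\,dt\right| \leq \sum_{x=0}^{N-1}\left| h\,f(\bar{x}) - \int_{I_x} f(t)\,dt\right|,
\end{align*}
so that it suffices to bound each local error by $\frac{h^3}{24}|f''|_{\mathrm{max}}$ (the maximum taken over $[a,b]$); summing the $N$ identical bounds then yields $N\cdot\frac{h^3}{24}|f''|_{\mathrm{max}} = \frac{(b-a)^3}{24N^2}|f''|_{\mathrm{max}}$, which is the claim.

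The heart of the argument is the local bound. Here I would Taylor-expand $f$ to second order about the midpoint $\bar{x}$: since $f$ is twice continuously differentiable, for each $t\in I_x$ there is a point $\xi_t$ between $t$ and $\bar{x}$ with $f(t) = f(\bar{x}) + f'(\bar{x})(t-\bar{x}) + \tfrac12 f''(\xi_t)(t-\bar{x})^2$. Integrating over $I_x$, the constant term reproduces $h\,f(\bar{x})$, and\,---\,this is the key cancellation responsible for the second-order accuracy\,---\,the linear term integrates to zero because $I_x$ is symmetric about $\bar{x}$, so that $\int_{I_x}(t-\bar{x})\,dt = 0$. What remains is the quadratic remainder, which I would bound as
\begin{align*}
\left|\int_{I_x} f(t)\,dt - h\,f(\bar{x})\right| = \left|\tfrac12\int_{I_x} f''(\xi_t)(t-\bar{x})^2\,dt\right| \leq \frac{|f''|_{\mathrm{max}}}{2}\int_{I_x}(t-\bar{x})^2\,dt = \frac{h^3}{24}|f''|_{\mathrm{max}},
\end{align*}
using $\int_{I_x}(t-\bar{x})^2\,dt = 2\int_0^{h/2} s^2\,ds = h^3/12$.

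I do not expect a genuine obstacle, since this is a textbook result (hence the citation); the only point requiring a little care is the remainder term $f''(\xi_t)$, whose argument $\xi_t$ depends on $t$. Rather than track this dependence I would simply bound $|f''(\xi_t)| \leq |f''|_{\mathrm{max}}$ pointwise before integrating. Alternatively, since $(t-\bar{x})^2\geq 0$ and $f''$ is continuous, the mean value theorem for integrals supplies a single $\xi\in I_x$ with $\int_{I_x} f''(\xi_t)(t-\bar{x})^2\,dt = f''(\xi)\,h^3/12$, giving the identical bound. The conceptual content lies entirely in the symmetry-induced vanishing of the first-order term; everything else is routine.
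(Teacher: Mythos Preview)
Your proof is correct and is the standard textbook argument for the composite midpoint rule. The paper does not actually supply a proof of this lemma; it merely states the result with a citation to external lecture notes, so there is no paper-internal argument to compare against. Your derivation---partitioning into subintervals, Taylor expansion at the midpoint, symmetry killing the linear term, and the explicit evaluation $\int_{-h/2}^{h/2}s^2\,ds=h^3/12$---is exactly what such a citation would point to.
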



\section{Proof of Theorem 1}\label{App:ErrorAnalysis}

In this Appendix we prove \Cref{thm:genericComplexity}, which bounds the gate complexity of our method. We present a slightly more formal version of \Cref{thm:genericComplexity}, which makes use of the following definitions:

\begin{restatable}{definition}{DefOne}\label{Def:Psi}
    \begin{equation*}
    \ket{\Psi_f} := \frac{1}{\norm{f}} \sum_{x=-\frac{N}{2}}^{\frac{N}{2}-1} f\left( \bar{x} \right) \ket{x},
\end{equation*}
where $f\colon[-a, a]\rightarrow \mathbb{R}$ has definite-parity, $N=2^n$, $\bar{x} :=\left( 2ax/N\right)$, and $\mathcal{N}_f := \sqrt{\sum |f(\cdot)|^2}$. We use a two's complement representation of signed integers (see Appendix~\ref{App:SignedIntegers}).
\end{restatable}

\begin{restatable}{definition}{DefTwo}\label{Def:FillFrac}
   For a function $p(y)$ in the range $y \in [-a,a]$ define the `discretized L2-norm filling-fraction'
\begin{equation}
    \fillfrac{p}{N} = \frac{\norm{p}}{\sqrt{N} |p(y)|_{\mathrm{max}}^{y \in [-a,a]}}
\end{equation}
which approximates the continuous quantity $\fillfrac{p}{\infty} := \sqrt{ \frac{\int_{-a}^a |p(y)|^2 dy}{2a \left(|p(y)|_{\mathrm{max}}^{y \in [-a,a]} \right)^2} }$. 
\end{restatable}

We now restate and prove Theorem~\ref{thm:genericComplexity} (as Theorem~\ref{thm:genericComplexityFormal}).

\begin{restatable}{theorem}{GeneralFormal}\label{thm:genericComplexityFormal}
For a definite-parity function $f(\cdot)$ on the interval $[-a,a]$, define $\ket{\Psi_f}$ as in Definition~\ref{Def:Psi}. We are given a degree $d$ definite-parity polynomial $h(y)$, obeying $|h(y)|_{\mathrm{max}}^{y \in [-1,1]} \leq  1$, which approximates $f(\cdot)$ as 
\begin{equation}
    \left|\tilde{f}(y) - \frac{f(a y)}{\mx{f(ay)}^{y \in [-1,1]}}\right|_{\mathrm{max}}^{y \in [-1,1]}  \leq \frac{\epsilon~\cdot~\mathrm{Min}\left(\fillfrac{f}{N},  \fillfrac{\tilde{f}}{N} \right)}{3} 
\end{equation}
where $\tilde{f}(y) := h(\sin(y/a))$. Then we can prepare a quantum state $\ket{\Psi_{\tilde{f}}}$ that is no more than $\epsilon$-far from $\ket{\Psi_f}$ in trace distance using a quantum circuit requiring $\mathcal{O}\left( \frac{n d}{\fillfrac{\tilde{f}}{N}}  \right)$ gates and at most 3 ancilla qubits.
\end{restatable}
\begin{proof}
Using the results of Lemma~\ref{Lemma:BlockEncodeSin} we can implement a $(1,1,0)$ block-encoding $U_{\sin}$ of the $n$ qubit operator $\sum_{x = -\frac{N}{2}}^{\frac{N}{2}-1} \sin \left( \frac{2x}{N} \right) \ket{x}\bra{x} $, using $\mathcal{O}(n)$ elementary single- and two-qubit gates. By the results of Lemma~\ref{Lemma:QSVT} we can implement a $(1,2,0)$ block-encoding $U_{\tilde{f}}$ of the $n$ qubit operator 
\begin{align}
    &\sum_{x = -\frac{N}{2}}^{\frac{N}{2}-1} h\left(\sin \left( \frac{2x}{N} \right)\right) \ket{x}\bra{x} \\
    = &\sum_{x = -\frac{N}{2}}^{\frac{N}{2}-1} \tilde{f}(\bar{x}) \ket{x}\bra{x}
\end{align}
using $\mathcal{O}(d)$ calls to $U_{\sin}$ and $U_{\sin}^\dag$, and $\mathcal{O}(d)$ additional elementary gates. Lemma~\ref{Lemma:QSVT} is applicable by the assumption that $|h(y)|_{\mathrm{max}}^{y \in [-1,1]} \leq  1$. This property further guarantees that $|h(\sin(y/a))|_{\mathrm{max}}^{y \in [-1,1]} \leq  1$.

Applying $U_{\tilde{f}}$ to the state $\ket{00} \frac{1}{\sqrt{N}} \sum_{x = -\frac{N}{2}}^{\frac{N}{2}-1} \ket{x}$ outputs
\begin{equation}
    \ket{00}\left(\frac{1}{\sqrt{N}} \sum_{x = -\frac{N}{2}}^{\frac{N}{2}-1} \tilde{f}(\bar{x}) \ket{x}\right) + \ket{\perp}
\end{equation}
where $\ket{\perp}$ is an $(n+2)$ qubit state orthogonal to $\ket{00}$. Measuring the first two ancilla qubits in $\ket{00}$ produces the state $\ket{\Psi_{\tilde{f}}} = \frac{1}{\norm{\tilde{f}}} \sum_{x = -\frac{N}{2}}^{\frac{N}{2}-1} \tilde{f}(\bar{x}) \ket{x}$ with success probability 
\begin{equation}
    \frac{\norm{\tilde{f}}^2}{N} = \left(|\tilde{f}(y)|_{\mathrm{max}}^{y \in [-1,1]} \fillfrac{\tilde{f}}{N} \right)^2.
\end{equation}

Using the bound
\begin{equation}
    \left|\tilde{f}(y) - \frac{f(a y)}{\mx{f(ay)}^{y \in [-1,1]}}\right|_{\mathrm{max}}^{y \in [-1,1]}  \leq \frac{\epsilon~\cdot~\mathrm{Min}\left(\fillfrac{f}{N},  \fillfrac{\tilde{f}}{N} \right)}{3} 
\end{equation}
ensures that 
\begin{align}
    |\tilde{f}(y)|_{\mathrm{max}}^{y \in [-1,1]} &\geq 1 - \frac{\epsilon~\cdot~\mathrm{Min}\left(\fillfrac{f}{N},  \fillfrac{\tilde{f}}{N} \right)}{3} \\
    &\geq \frac{2}{3}
\end{align}
where we have used that $\epsilon, \mathrm{Min}\left(\fillfrac{f}{N},  \fillfrac{\tilde{f}}{N} \right) \leq 1$.

Hence the success probability is lower bounded by $\frac{4}{9}\left(\fillfrac{\tilde{f}}{N} \right)^2 $. Using the results of exact amplitude amplification from Theorem~\ref{thm:ExactAmpAmp}, the success probability can be boosted to unity using a quantum circuit that makes $\mathcal{O}\left(1 / \fillfrac{\tilde{f}}{N}\right)$ calls to $U_{\tilde{f}}$, $U_{\tilde{f}}^\dag$, and requires $\mathcal{O}\left(n / \fillfrac{\tilde{f}}{N}\right)$ additional elementary gates to implement the reflection operators. The circuit requires at most one additional ancilla qubit.

The circuit thus uses $\mathcal{O}\left( \frac{n d}{\fillfrac{\tilde{f}}{N}}  \right)$ gates and at most 3 ancilla qubits to prepare the state $\ket{\Psi_{\tilde{f}}}$ with probability 1. By the results of Lemma~\ref{Lemma:ErrorBoundsRigorous}, this state is no more than $\epsilon$-far in trace distance from $\ket{\Psi_{f}}$.
\end{proof}

\subsection{Lemmas for proving Theorem 1}

\begin{lemma}\label{Lemma:BlockEncodeSin}
    There exists a quantum circuit $U_{\mathrm{sin}}$ that implements a $(1,1,0)$-block-encoding of the $n$ qubit operator $\sum_{x = -\frac{N}{2}}^{\frac{N}{2}-1} \sin \left( \frac{2x}{N} \right) \ket{x}\bra{x} $. The circuit $U_{\mathrm{sin}}$ uses $\mathcal{O}(n)$ elementary single- and two-qubit gates.
\end{lemma}
\begin{proof}
Define $R_z(\theta) = \mathrm{Diag}(1, e^{i\theta})$.
First, observe that the following two-qubit circuit with $y \in \{0,1\}$
\[
\Qcircuit @C=.5em @R=0.2em @!R {
	\lstick{\ket{0}} & \push{\rule{.1em}{0em}} & \gate{H} & \ctrl{1} & \qw & \qw & \ctrl{1} & \qw & \gate{R_z(-\theta)} & \gate{H} & \gate{Y} & \qw \\
	\lstick{\ket{y}} & \push{\rule{.1em}{0em}} & \qw & \targ & \qw & \gate{R_z(\theta)} & \targ & \qw & \qw & \qw & \qw & \qw \\
}
\]
transforms 
\begin{equation}
    \ket{0}\ket{y} \rightarrow \left(\sin(\theta \cdot y)\ket{0} + i\cos(\theta \cdot y)\ket{1}\right)\ket{y}.
\end{equation}

Second, consider the following sequence of $R_z$ rotations acting on $n$ qubits~\cite{GidneyBlog2017}:
\begin{align}\label{Eq:RzCircuit}
    R_z\left(-2^{0}\right) \ket{x_{n-1}} \left( \bigotimes_{j=n-2}^{j=0} R_z(2^{j-(n-1)}) \ket{x_j} \right) \\
    = e^{i \left(-x_{n-1} + \sum_{j=0}^{n-2} 2^{j} 2^{-(n-1)} x_j \right) } \ket{x_{n-1}} ... \ket{x_0}.
\end{align}
Using the signed integer representation in Appendix~\ref{App:SignedIntegers}, we express the $n$ bit integer $x$ as $x = - 2^{n-1}x_{n-1} + \sum_{j=0}^{n-2} 2^j x_j$. Hence, the above sequence of $R_z$ rotations implements the transformation
\begin{equation}
    \ket{x} = \ket{x_{n-1}} ... \ket{x_0} \rightarrow e^{i x / 2^{n-1}} \ket{x}.
\end{equation}

Combining these two circuits as $U_{\mathrm{sin}}$
\[
\Qcircuit @C=.5em @R=0.2em @!R {
	\lstick{\ket{0}} & \push{\rule{.1em}{0em}} & \gate{H} & \ctrl{4} & \qw & \qw & \ctrl{4} & \qw & \gate{R_z(\phi)} & \gate{H} & \gate{Y} & \qw \\
	\lstick{\ket{x_0}} & \push{\rule{.1em}{0em}} & \qw & \targ & \qw & \gate{R_z\left(2^{1-n}\right)} & \targ & \qw & \qw & \qw & \qw & \qw \\
	\lstick{\ket{x_1}} & \push{\rule{.1em}{0em}} & \qw & \targ & \qw & \gate{R_z\left(2^{2-n}\right)} & \targ & \qw & \qw & \qw & \qw & \qw \\
	\lstick{\vdots} & \push{\rule{.1em}{0em}} & \qw & \targ & \qw & \gate{\vdots} & \targ & \qw & \qw & \qw & \qw & \qw \\
	\lstick{\ket{x_{n-1}}} & \push{\rule{.1em}{0em}} & \qw & \targ & \qw & \gate{R_z\left(-2^{0}\right)} & \targ & \qw & \qw & \qw & \qw &\qw
}
\]
with $\phi = 1 - \sum_{j=0}^{n-2} 2^{j} 2^{-(n-1)}$, yields the transformation
\begin{equation}
    U_{\mathrm{sin}} \ket{0}\ket{x} \rightarrow \left(\sin\left(\frac{2x}{2^n}\right)\ket{0} + i\cos\left(\frac{2x}{2^n}\right)\ket{1}\right)\ket{x}.
\end{equation}
We see that
\begin{align}
    \left(\bra{0} \otimes I_n\right) U_{\mathrm{sin}} \left(\ket{0} \otimes I_n\right) \\
    = \left(\bra{0} \otimes I_n\right) U_{\mathrm{sin}} \left(\ket{0} \otimes \sum_x \ket{x}\bra{x}\right) \\
    = \sum_x \sin\left(\frac{2x}{N}\right) \ket{x}\bra{x}
\end{align}
where we have used that $N = 2^n$. Hence, $U_{\mathrm{sin}}$ is a $(1,1,0)$ block-encoding of $\sum_{x = -\frac{N}{2}}^{\frac{N}{2}-1} \sin \left( \frac{2x}{N} \right) \ket{x}\bra{x} $. The circuit $U_{\mathrm{sin}}$ uses $\mathcal{O}(n)$ elementary single- and two-qubit gates.
\end{proof}

\begin{lemma}\label{Lemma:QSVT}
    Given a degree $d$ polynomial $h(\cdot)$ of definite parity with the constraint $\max_{y \in [-1,1]} |h(y)| \leq 1$, and a $(1,m,0)$ block-encoding $U_A$ of a Hermitian operator $A$, there exists a quantum circuit $U_{h}$ that implements a $(1,m+1,0)$ block-encoding of the operator $h(A)$. The circuit $U_h$ makes $d/2$ calls to $U_A$, $d/2$ calls to $U_A^\dag$, and uses $\mathcal{O}(md)$ additional elementary single- and two-qubit gates.
\end{lemma}
\begin{proof}
    This follows directly from the results of \cite[Lemma 18]{gilyen2018quantum}, using quantum singular value transformation (QSVT) applied to the block-encoding $U_{A}$.
\end{proof}

\begin{lemma}\label{Lemma:ErrorBoundsRigorous}
    For a definite-parity function $f : [-a,a] \rightarrow \mathbb{R}$ define the $n$-qubit state $\ket{\Psi_f}$ as in Def.~\ref{Def:Psi}, and $\fillfrac{f}{N}$ as in Def.~\ref{Def:FillFrac}. Given a definite parity function $\tilde{f}(\cdot)$ such that $|\tilde{f}(y)|_{\mathrm{max}}^{y \in [-1,1]} \leq 1$ and $\left| \tilde{f}(y) - \frac{f(a y)}{\mx{f(ay)}^{y \in [-1,1]}}\right|_{\mathrm{max}}^{y \in [-1,1]}  \leq \frac{1}{3}\epsilon \cdot \mathrm{Min}\left(\fillfrac{f}{N}, \fillfrac{\tilde{f}}{N} \right) $, then the corresponding quantum states $\ket{\Psi_{f}}$ and $\ket{\Psi_{\tilde{f}}}$ are at worst $\epsilon$ far-apart in trace distance.
\end{lemma}
\begin{proof}
    First renormalize the polynomials $f(\cdot)$ and $\tilde{f}(\cdot)$ to ensure their maximum absolute values correspond to $-1$ or $1$.
    Define $\fu(y) := \frac{f(a y)}{\mx{f(ay)}^{y \in [-1,1]}}$, such that $\left| \tilde{f}(y) - \fu(y) \right|_{\mathrm{max}}^{y \in [-1,1]} \leq \delta'$ for a chosen $\delta'$. It is given that $\left| \tilde{f}(y) \right|_{\mathrm{max}}^{y \in [-1,1]} \leq 1$. To account for the case where $\tilde{f}(y)$ is subnormalized, let $\left| \tilde{f}(y) \right|_{\mathrm{max}}^{y \in [-1,1]} = 1 - \kappa \geq 1-\delta'$. Then
    \begin{align}
        &\left| \frac{\tilde{f}(y)}{1-\kappa} - \fu(y) \right|_{\mathrm{max}}^{y \in [-1,1]} \\
        &\leq \frac{1}{1-\kappa}\left| \tilde{f}(y) - \fu(y) \right|_{\mathrm{max}}^{y \in [-1,1]} + \frac{\kappa}{1-\kappa} \left| \fu(y) \right|_{\mathrm{max}}^{y \in [-1,1]} \\
        &\leq \frac{\delta' + \kappa}{1-\kappa} \leq \frac{2\delta'}{1-\delta'} := \delta.
    \end{align}
    Accordingly, define $\fut(y) := \frac{\tilde{f}(y)}{1-\kappa}$, ensuring that
    \begin{align}
        \left| \fu(y) \right|_{\mathrm{max}}^{y \in [-1,1]} &= 1 \\
        \left| \fut(y) \right|_{\mathrm{max}}^{y \in [-1,1]} &= 1 \\
        \left| \fut(y) - \fu(y) \right|_{\mathrm{max}}^{y \in [-1,1]} &\leq \delta
    \end{align}
    
    Second, observe that this normalization does not change the definition of the corresponding quantum states. This is because for a polynomial $p(x)$ normalized by a constant $c$
    \begin{align}
        \ket{\Psi_{\frac{p}{c}}} &:= \frac{1}{\norm{\frac{p}{c}}} \sum_{x = -\frac{N}{2}}^{\frac{N}{2}-1} \frac{p(2ax/N)}{c} \ket{x} \\
        &= \frac{c}{\norm{p}} \sum_{x = -\frac{N}{2}}^{\frac{N}{2}-1} \frac{p(2ax/N)}{c} \ket{x} \\
        &= \ket{\Psi_{p}}
    \end{align}
    Hence, renormalizing the functions as above does not change their trace distance.

    We can thus bound $\mathcal{D} \left( \ket{\Psi_{f}}, \ket{\Psi_{\tilde{f}}} \right)$ by exploiting its equality with $\mathcal{D} \left( \ket{\Psi_{\fu}}, \ket{\Psi_{\fut}} \right)$.

    We first bound $| \braket{\Psi_{\fu}}{\Psi_{\fut}} |^2$. The relation $\left| \fut(y) - \fu(y) \right|_{\mathrm{max}}^{y \in [-1,1]} \leq \delta$ implies $\fu(y) \fut(y) \geq \frac{1}{2}\left(\fu(y)^2 + \fut(y)^2 - \delta^2 \right)$. Thus,
    \begin{align}
    |\braket{\Psi_{\fu}}{\Psi_{\fut}} |^2 &= \left| \frac{1}{\norm{\fu} \norm{\fut} } \sum_x \fu(\bar{x}) \fut(\bar{x}) \right|^2 \\
    &\geq \left| \frac{1}{2 \norm{\fu} \norm{\fut} } \sum_x |\fu(\bar{x})|^2 + |\fut(\bar{x})|^2 - \delta^2 \right|^2 \\
    &= \frac{1}{4}\left| \frac{\norm{\fu}}{\norm{\fut}} + \frac{\norm{\fut}}{\norm{\fu}} - \frac{N \delta^2}{\norm{\fut} \norm{\fu}} \right|^2
\end{align}

Expanding out the square gives
\begin{align}
    \frac{1}{4}\left( \frac{\norm{\fu}^2}{\norm{\fut}^2} + \frac{\norm{\fut}^2}{\norm{\fu}^2} + 2 - \frac{2N \delta^2}{\norm{\fut} \norm{\fu}} \left( \frac{\norm{\fu}}{\norm{\fut}} + \frac{\norm{\fut}}{\norm{\fu}} \right) + \left( \frac{N \delta^2}{\norm{\fut} \norm{\fu}} \right)^2 \right)
\end{align}
Let $\norm{\fu}= A$ and $\norm{\fut} = B$ in the first two terms. We can use
\begin{equation}
    \frac{A^2}{B^2} + \frac{B^2}{A^2} \geq 2
\end{equation}
(as $(A^2 - B^2)^2 \geq 0$) to simply the above expression to
\begin{align}
    \geq \frac{1}{4}\left( 4 - \frac{2N \delta^2}{\norm{\fut} \norm{\fu}} \left( \frac{\norm{\fu}}{\norm{\fut}} + \frac{\norm{\fut}}{\norm{\fu}} \right) + \left( \frac{N \delta^2}{\norm{\fut} \norm{\fu}} \right)^2 \right).
\end{align}
We can drop the final term, as it strictly increases the value of the expression
\begin{align}
    &\geq 1 - \frac{N \delta^2}{2\norm{\fut} \norm{\fu}} \left( \frac{\norm{\fu}}{\norm{\fut}} + \frac{\norm{\fut}}{\norm{\fu}} \right) \\ 
    & = 1 - \frac{N \delta^2}{2} \left( \frac{\norm{\fu}^2 + \norm{\fut}^2}{\norm{\fut}^2 \norm{\fu}^2} \right) \label{AppEq:B12}
\end{align}
We now define $\alpha = \mathrm{Max}(\norm{\fut}, \norm{\fu})$, $\beta = \mathrm{Min}(\norm{\fut}, \norm{\fu})$, such that $\alpha \geq \beta$ (thus $\beta^2/\alpha^2 \leq 1)$ . Then
\begin{align}
    \frac{\alpha^2 + \beta^2}{\alpha^2 \beta^2} &= \frac{\alpha^2 \left(1 + \frac{\beta^2}{\alpha^2} \right)}{\alpha^2 \beta^2} \leq \frac{2}{\beta^2}.
\end{align}
Eq.~(\ref{AppEq:B12}) then becomes $\geq 1 - \frac{N \delta^2}{\beta^2}$, with $\beta = \mathrm{Min}(\norm{\fut}, \norm{\fu})$. We now examine the value $N / \beta^2$. Without loss of generality, choose $\beta = \norm{\fu}$ here. Then we have
\begin{align}
    \frac{N}{\norm{\fu}^2} =\frac{N \mx{f}^2}{\norm{f}^2} = \left( \fillfrac{f}{N} \right)^{-2}
\end{align}
using the definition of the discretized L2-filling fraction.
Similarly,
\begin{align}
    \frac{N}{\norm{\fut}^2} =\frac{N \mx{\fut}^2}{\norm{\fut}^2 } = \frac{N \mx{\tilde{f}}^2}{\norm{\tilde{f}}^2 } = \left( \fillfrac{\tilde{f}}{N} \right)^{-2}.
\end{align}
Hence,
\begin{align}
    |\braket{\Psi_{\fu}}{\Psi_{\fut}} |^2 \geq 1 - \left( \frac{\delta}{\mathrm{Min}\left(\fillfrac{f}{N}, \fillfrac{\tilde{f}}{N} \right)} \right)^2 
\end{align}
As a result,
\begin{align}
    \mathcal{D}(\ket{\Psi_{\fut}}, \ket{\Psi_{\fu}}) &\leq \frac{\delta}{\mathrm{Min}\left(\fillfrac{f}{N},  \fillfrac{\tilde{f}}{N} \right)}.
\end{align}
The equivalence between $\mathcal{D} \left( \ket{\Psi_{f}}, \ket{\Psi_{\tilde{f}}} \right)$ and $\mathcal{D} \left( \ket{\Psi_{\fu}}, \ket{\Psi_{\fut}} \right)$ yields
\begin{align}
    \mathcal{D}(\ket{\Psi_{\tilde{f}}}, \ket{\Psi_{f}}) &\leq \frac{\delta}{\mathrm{Min}\left(\fillfrac{f}{N},  \fillfrac{\tilde{f}}{N} \right)} \\
    &= \frac{2\delta'}{(1-\delta')\mathrm{Min}\left(\fillfrac{f}{N},  \fillfrac{\tilde{f}}{N} \right)}.
\end{align}
Observe that $\fillfrac{f}{N},  \fillfrac{\tilde{f}}{N} \leq 1$ and choose $\delta' \leq 1/3$. Then
\begin{equation}
    \delta' := \frac{1}{3}\epsilon \cdot \mathrm{Min}\left(\fillfrac{f}{N}, \fillfrac{\tilde{f}}{N} \right)
\end{equation}
ensures
\begin{equation}
    \mathcal{D}(\ket{\Psi_{\tilde{f}}}, \ket{\Psi_{f}}) \leq \epsilon.
\end{equation}

\end{proof}


\section{Tighter error analysis}\label{App:TighterErrorAnalysis}

The error bound in Lemma~\ref{Lemma:ErrorBoundsRigorous} is an overly pessimistic error bound, as it assumes that the error in the function approximation is the same at every point. For approximation methods such a Taylor series, the maximum error can be considerably larger than the average error. As a result, we can directly calculate the trace distance between the states
\begin{align}
    D(\ket{\Psi_{\tilde{f}}}, \ket{\Psi_f}) &= \sqrt{1 - |\braket{\Psi_f}{\Psi_{\tilde{f}}}|^2} \\ \nonumber
    &= \sqrt{1 - \bigg{|} \sum_x \frac{f(\bar{x}) \tilde{f}(\bar{x})}{\mathcal{N}_f \cdot \mathcal{N}_{\tilde{f}} }} \bigg{|}^2.
\end{align}
For a sufficiently large number of discretization points
\begin{equation}
    \int_a^b y(\bar{x}) d\bar{x} \approx \frac{(b-a)}{N} \sum_{x=0}^{N-1} y(\bar{x}),
\end{equation}
as shown in \Cref{Apx:genDiscBounds}, which lets us approximate the trace distance between the states by
\begin{equation}
    \sqrt{1 - \bigg{|} \frac{\int_a^b f(\bar{x}) \tilde{f}(\bar{x}) d\bar{x}}{\twonm{f} \cdot  \twonm{\tilde{f}}}} \bigg{|}^2.
\end{equation}


\section{Modified Bessel functions}\label{App:Bessel}
In this appendix we list some properties of modified Bessel functions that we use later for analyzing Kaiser Windows.
First let us recall \cite[Eq. (9.6.12)]{abramowitz1974HandbookMathFuns} the Taylor series of $I_0(z)$:
\begin{align}\label{eq:I0Taylor}
	I_0(z)=\sum_{k=0}^\infty \frac{(z^2/4)^k}{(k!)^2}.
\end{align}
We will also use the following integral representations \cite[Eqs. (9.6.18-9.6.19)]{abramowitz1974HandbookMathFuns}:
\begin{align}
	I_n(z)&=\frac{1}{\pi}\int_{0}^\pi \exp(z \cos(\theta))\cos(n\theta)d\theta\label{eq:I0CosIntegral}\\
	&=\frac{(\frac{z}{2})^n}{\sqrt{\pi}\Gamma(n+\frac{1}{2})}\int_{0}^\pi \exp(z \cos(\theta))\sin^{2n}(\theta)d\theta.
	\label{eq:I0SinIntegral}
\end{align}

\section{Taylor series truncation bounds}\label{App:Taylor}
Let us introduce some notation that we use throughout this appendix. For a function $h\colon \mathbb{R}\rightarrow \mathbb{C}$ that is analytic in a neighborhood of $0$ so that $h(y)=\sum_{k=0}^\infty b_k y^k$ we denote by $\vertiii{h}_1:=\sum_{k=0}^\infty|b_k|$ the sum of the absolute values of the Taylor coefficients.

Now we prove our result on the truncation error based on Taylor series expansion:
\begin{restatable}{theorem}{taylor}\label{thm:taylor}
	Let $b>0$ and $f(x_0+x)=\sum_{k=0}^\infty a_k x^k$ for every $x\in(-b,b)$ and suppose $\sum_{k=0}^\infty |a_k|b^k\leq B$. Then $g(y):=f(x_0+\frac{2b}{\pi}\arcsin(y))=\sum_{k=0}^{\infty}c_k y^k$ is such that $\sum_{k=0}^{\infty}|c_k|\leq B$, thus for all $\nu,\delta\in (0,1]$ there is a polynomial $P(y)$ of degree $\bigO{\ln(B/\delta)/\nu}$ such that for all $y\in[-1+\nu, 1-\nu]\colon$
\begin{align*}
	\left|g(y)-P(y)\right|\leq \delta
\end{align*}
and for all $y\in [-1,1]$ we have that $|P(y)|$ is bounded by $\delta+\max_{x\in[-\arcsin(1-\nu/2),\arcsin(1-\nu/2)]}|f(x_0+\frac{2b}{\pi}x)|$.
\end{restatable}
\begin{proof}
	The proof is inspired by \cite[Lemma 37]{apeldoorn2017QSDPSolvers} where it is noted that $\vertiii{\frac{2}{\pi}\arcsin(y)}_1=1$. This implies that 
	\begin{align*}
		\vertiii{g(y)}_1
		&=\vertiii{f(x_0+\frac{2b}{\pi}\arcsin(y))}_1\\
		&=\vertiii{\sum_{k=0}^\infty a_k \left(\frac{2b}{\pi}\arcsin(y))\right)^{\!\!k}}_1\\
		&\leq\sum_{k=0}^\infty |a_k| \vertiii{\!\left(\frac{2b}{\pi}\arcsin(y))\right)^{\!\!k}}_1\\
		&\leq\sum_{k=0}^\infty |a_k| \vertiii{\frac{2b}{\pi}\arcsin(y))}_1^{k}\\			
		&=\sum_{k=0}^\infty |a_k| b^k\\
		&\leq B.
	\end{align*}
    Now we can apply \cite[Corollary 66]{gilyen2018quantum} (setting therein $f\leftarrow g,x\leftarrow y,x_0\leftarrow0,r\leftarrow1-\nu,\delta\leftarrow\nu,\eps\leftarrow\delta$) to convert it to a bounded polynomial $P(y)$ on $[-1,1]$.
\end{proof}
Using this theorem we can give analytical bounds on the degree required for approximating the standard normal distribution as follows:
\begin{corollary}\label{cor:ExpDeg}
	Let $\beta,\delta>0$, then there is a degree $ d =\bigO{\beta+\ln(1/\delta)}$ polynomial $P(y)$ bounded by $1$ on $[-1,1]$ such that for every $y\in [-\sin(1),\sin(1)]$ we have that
	\begin{align}\label{eq:normalTruncated}
		\left|\exp\left(-\frac{\beta}{2}\arcsin^2(y)\right)-P(y)\right|\leq \delta.
	\end{align}
\end{corollary}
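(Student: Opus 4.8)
The plan is to directly apply \Cref{thm:taylor} to the Gaussian-type function $f(x)=\exp(-\frac{\beta}{2}x^2)$ with the specific choice $x_0=0$ and $b=1$, so that the composite function $g(y)=f(\frac{2}{\pi}\arcsin(y))=\exp(-\frac{\beta}{2}(\frac{2}{\pi}\arcsin(y))^2)$ matches the form appearing in the corollary up to the scaling of $\arcsin$. First I would verify that the left-hand side of \eqref{eq:normalTruncated}, namely $\exp(-\frac{\beta}{2}\arcsin^2(y))$, is exactly $g(y)$ with the convention $b=\frac{\pi}{2}$ in \Cref{thm:taylor}, since then $\frac{2b}{\pi}\arcsin(y)=\arcsin(y)$ and $f(x_0+x)=\exp(-\frac{\beta}{2}x^2)$ evaluated at $x=\arcsin(y)$ gives precisely the target. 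So the polynomial $P(y)$ will be the degree $T-1$ Taylor truncation $\sum_{k=0}^{T-1}c_k y^k$ guaranteed by the theorem.

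The key remaining step is to compute the bound $B$ on the Taylor coefficients of $f(x_0+x)=\exp(-\frac{\beta}{2}x^2)$ over the radius $b=\frac{\pi}{2}$. I would expand $\exp(-\frac{\beta}{2}x^2)=\sum_{j=0}^\infty \frac{(-\beta/2)^j}{j!}x^{2j}$, so the absolute coefficient sum evaluated at $x=b$ is $\sum_{j=0}^\infty \frac{(\beta/2)^j}{j!}b^{2j}=\exp(\frac{\beta}{2}b^2)=\exp(\frac{\beta\pi^2}{8})$. Hence $\sum_k|a_k|b^k = \exp(\frac{\beta\pi^2}{8})=:B$, and by \Cref{thm:taylor} we get $\vertiii{g}_1\leq B$. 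This is where the $\frac{\beta\pi^2}{8}$ term in the degree bound originates.

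Next I would apply the truncation error conclusion of \Cref{thm:taylor}: on the interval $y\in[-\sin(1),\sin(1)]$ we have $|y|\leq\sin(1)=1-\nu$ with $\nu=1-\sin(1)$, and choosing $T\geq \ln(B/\delta)/\nu$ yields truncation error at most $\delta$. Substituting $\ln(B)=\frac{\beta\pi^2}{8}$ gives $T\geq \frac{\frac{\beta\pi^2}{8}+\ln(1/\delta)}{1-\sin(1)}$, so the polynomial degree $d=T-1\leq \frac{\frac{\beta\pi^2}{8}+\ln(1/\delta)}{1-\sin(1)}-1$, matching the claimed bound exactly.

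The main obstacle is essentially bookkeeping rather than a deep difficulty: I must be careful to match the scaling convention (ensuring $\frac{2b}{\pi}=1$, i.e. $b=\frac{\pi}{2}$) and to confirm that the hypothesis $\sum_k|a_k|b^k\leq B$ of \Cref{thm:taylor} holds with $B=\exp(\frac{\beta\pi^2}{8})$ as a genuine \emph{finite} bound, which it does since the Gaussian's Taylor series is entire. I would also double-check that the interval $[-\sin(1),\sin(1)]$ is correctly identified with $[-1+\nu,1-\nu]$ for $\nu=1-\sin(1)\in(0,1]$, so that the theorem's hypothesis $\nu\in(0,1]$ is satisfied. Once these identifications are in place, the corollary follows immediately by specializing \Cref{thm:taylor}.
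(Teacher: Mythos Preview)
Your proposal is correct and follows essentially the same route as the paper: apply \Cref{thm:taylor} with $x_0=0$, $b=\frac{\pi}{2}$, compute $B=\exp(\frac{\beta\pi^2}{8})$ from the Taylor series of $\exp(-\frac{\beta}{2}x^2)$, and set $\nu=1-\sin(1)$ to read off the degree bound. The only wobble is your opening sentence, where you write $b=1$ before immediately correcting to $b=\frac{\pi}{2}$; just start with $b=\frac{\pi}{2}$ and the argument is clean.
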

\begin{proof}
	Apply \Cref{thm:taylor} with setting $b=\frac{\pi}{2}$ and $\nu=1-\sin(1)$ observing that 
	$\exp(-\frac{\beta}{2}x^2)=\sum_{k=0}^\infty \left(-\frac{\beta}{2}\right)^{\!k}\frac{x^{2k}}{k!}$
	and 
	$\sum_{k=0}^\infty \left(\frac{\beta}{2}\right)^{\!k}\frac{\left(\frac{\pi}{2}\right)^{2k}}{k!}
	=\sum_{k=0}^\infty \frac{\left(\frac{\beta\pi^2}{8}\right)^{\!k}}{k!}=\exp\left(\frac{\beta\pi^2}{8}\right)=:B$.
\end{proof}

Similarly, we get analytical bounds on the degree required for approximating the Kaiser window function $W_\beta(x) =\frac{I_0(\beta\sqrt{1-x^2})}{I_0(\beta)}$:
\begin{corollary}\label{cor:KaisDeg}
	Let $\beta,\delta>0$, then there is a degree $ d =\bigO{\beta+\ln(1/\delta)}$ polynomial $P(y)$ bounded by $1$ on $[-1,1]$ such that for every $y\in [-\sin(1),\sin(1)]$ we have that 
	\begin{align}\label{eq:KaiserTruncated}
		\left|W_\beta(\arcsin(y))-P(y)\right|\leq \delta.
	\end{align}
\end{corollary}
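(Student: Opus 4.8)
The plan is to invoke \Cref{thm:taylor} with $x_0=0$ and $b=\tfrac{\pi}{2}$ (so that $\tfrac{2b}{\pi}=1$ and the composed function is exactly $W_\beta(\arcsin(y))$), exactly mirroring the proof of \Cref{cor:ExpDeg}. Choosing $\nu=1-\sin(1)$ makes the approximation interval $[-1+\nu,1-\nu]$ equal to $[-\sin(1),\sin(1)]$, and taking $T=\tfrac{\beta+\ln(1/\delta)}{1-\sin(1)}$ with $P(y)=\sum_{k=0}^{T-1}c_k y^k$ of degree $d=T-1$ yields the claimed bound. Everything therefore reduces to controlling the constant $B$ of \Cref{thm:taylor}, i.e. to showing $\sum_{k}|a_k|(\pi/2)^k\le e^\beta$ for the Taylor coefficients $W_\beta(x)=\sum_k a_k x^k$, so that $\ln B\le\beta$.

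First I would compute $B$ in closed form. Using the series \eqref{eq:I0Taylor} we have $W_\beta(x)=\frac{1}{I_0(\beta)}\sum_{k=0}^\infty\frac{(\beta^2/4)^k}{(k!)^2}(1-x^2)^k$, which is entire in $x$ since $I_0$ depends only on the square of its argument (so the apparent branch point of $\sqrt{1-x^2}$ at $x=\pm1$ is spurious). Expanding $(1-x^2)^k$ by the binomial theorem and collecting powers of $x$, the coefficient of $x^{2j}$ equals $(-1)^j$ times a manifestly positive sum over $k$; hence forming $\sum_k|a_k|(\pi/2)^k$ merely replaces each factor $(1-x^2)^k$ by $(1+\pi^2/4)^k$, via $\sum_{j=0}^k\binom{k}{j}(\pi/2)^{2j}=(1+\pi^2/4)^k$. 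This gives the clean identity $B=\frac{I_0(\beta\sqrt{1+\pi^2/4})}{I_0(\beta)}$ (equivalently $B=W_\beta(i\pi/2)$).

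The substantive step is then the Bessel-function inequality $I_0(c\beta)\le e^{\beta}I_0(\beta)$ with $c:=\sqrt{1+\pi^2/4}$. I would prove it using the integral representation \eqref{eq:I0CosIntegral}, namely $I_0(z)=\tfrac1\pi\int_0^\pi e^{z\cos\theta}\,d\theta$. The key is the elementary pointwise bound $c\cos\theta\le 1+\cos\theta$ for all $\theta\in[0,\pi]$: this is $(c-1)\cos\theta\le1$, which holds because $c-1=\sqrt{1+\pi^2/4}-1\le1$ (equivalently $\pi^2\le 12$) and $\cos\theta\le1$, while for $\cos\theta<0$ the left side is negative. Multiplying by $\beta\ge0$ and exponentiating gives $e^{c\beta\cos\theta}\le e^{\beta}e^{\beta\cos\theta}$ pointwise, and integrating over $\theta$ yields $I_0(c\beta)\le e^{\beta}I_0(\beta)$, i.e. $B\le e^{\beta}$. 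Feeding $\ln B\le\beta$ into \Cref{thm:taylor} produces the degree bound $d\le\frac{\beta+\ln(1/\delta)}{1-\sin(1)}-1$.

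I expect the only genuine obstacle to be the evaluation and bounding of $B$: once the problem is reduced to the ratio $I_0(c\beta)/I_0(\beta)$, the integral-representation argument dispatches it cleanly and \emph{uniformly} in $\beta$, avoiding any case analysis, logarithmic-derivative monotonicity, or large/small-$\beta$ asymptotics that a direct series comparison would invite. The remaining bookkeeping (entirety of $W_\beta$, the binomial sum, and the final application of \Cref{thm:taylor}) is a routine specialization of the Gaussian argument in \Cref{cor:ExpDeg}.
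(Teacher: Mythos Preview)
Your proposal is correct and follows essentially the same route as the paper: apply \Cref{thm:taylor} with $b=\pi/2$, reduce the bound on $B$ to a ratio of Bessel functions via the series \eqref{eq:I0Taylor}, and control that ratio by the integral representation \eqref{eq:I0CosIntegral}. The only cosmetic difference is that the paper first rounds $\sqrt{1+\pi^2/4}$ up to $2$ (using $\pi^2<12$) and then bounds $I_0(2\beta)/I_0(\beta)\le e^\beta$ via $\cos\theta\le 1$, whereas you keep $c=\sqrt{1+\pi^2/4}$ and use the equivalent pointwise bound $(c-1)\cos\theta\le 1$; both hinge on the same inequality $\pi^2<12$.
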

\begin{proof}
	We will apply \Cref{thm:taylor} with setting $b=\frac{\pi}{2}$. To compute an upper bound $B$ we observe that
	the smallest possible value of $B$ is given by $\vertiii{W_\beta\left(\frac{\pi}{2}x\right)}_1$. To analyze this quantity let us recall \Cref{eq:I0Taylor} stating that $I_0(z)=G(z^2)$ for the entire
	function $G(z)=\sum_{k=0}^\infty \frac{(z/4)^k}{(k!)^2}$. This means that $W_\beta(x)=\frac{G(\beta^2(1-x^2))}{G(\beta^2)}$ so
	\begin{align}
		\vertiii{W_\beta\left(\frac{\pi}{2}x\right)}_1
		&=\vertiii{G\left(\beta^2(1-\frac{\pi^2}{4}x^2)\right)}_1/G(\beta^2)\\
		&=\vertiii{\sum_{k=0}^\infty \frac{(\beta^2(1-\frac{\pi^2}{4}x^2)/4)^k}{(k!)^2}}_1/G(\beta^2)\\	
		&\leq\sum_{k=0}^\infty \frac{\vertiii{\beta^2(1-\frac{\pi^2}{4}x^2)/4}_1^k}{(k!)^2}/G(\beta^2)\\				
		&=\sum_{k=0}^\infty \frac{(\beta^2(1+\frac{\pi^2}{4})/4)^k}{(k!)^2}/G(\beta^2)\\		
		&\leq\sum_{k=0}^\infty \frac{((4\beta^2)/4)^k}{(k!)^2}/G(\beta^2)\\		
		&=\frac{G\left(4\beta^2\right)}{G(\beta^2)}
		=\frac{I_0\left(2\beta\right)}{I_0\left(\beta\right)}
		\leq\exp\left(\beta\right),	
	\end{align}
	where the last inequality follows from the integral representation of Bessel functions \cite[Eq. (9.6.19)]{abramowitz1974HandbookMathFuns}:
	\begin{align*}
		I_0(2\beta)&=\frac{1}{\pi}\int_0^\pi\exp(2\beta \cos(\theta))d\theta\\
		&\leq\frac{1}{\pi}\int_0^\pi\exp(\beta \cos(\theta))\exp(\beta)d\theta\\
		&=I_0(\beta)\exp(\beta).	\tag*{\qedhere}
	\end{align*}
\end{proof}

Note that the above proofs are constructive in the sense that they also enable explicitly computing approximating polynomials by (approximately) computing the coefficients of the Taylor series. Those coefficients can be computed for example utilizing the Taylor series of $\arcsin(x)=\sum_{\ell=0}^\infty\binom{2\ell}{\ell}\frac{2^{-2\ell}}{2\ell+1}x^{2\ell+1}$.


\section{Analysis of filling fractions}\label{App:Filling}


\begin{lemma}\label{lem:contFill} 
	Consider the functions $\exp(-\frac{\beta}{2}x^2)$ and $W_\beta(x)$ on the interval $[-1,1]$ for some $\beta \geq 0$, then $f(x)\geq 1- \frac{\beta}{2}x^2$ and so the filling fraction satisfies 
	\begin{align}
		\fillfrac{f}{\infty}&\geq 
		\begin{cases}
			\frac{1}{\sqrt[2]{2}}, & \text{for $\beta\leq 2$}\\
			\frac{1}{\sqrt[4]{2\beta}}, & \text{for $\beta\geq 2$}.
		\end{cases}
	\end{align}
\end{lemma}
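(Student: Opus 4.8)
The plan is to prove the two claims in sequence: first the pointwise lower bound $f(x)\geq 1-\frac{\beta}{2}x^2$, and then integrate it to control the filling fraction. Throughout I would use that both functions attain their maximum value $1$ at $x=0$ (for $W_\beta$ because $I_0$ is increasing with nonnegative Taylor coefficients, so $\sqrt{1-x^2}<1$ forces $W_\beta(x)<1$), so that on $[-1,1]$ we have $|f|_{\mathrm{max}}=1$ and $b-a=2$, whence
\begin{align*}
\fillfrac{f}{\infty}=\sqrt{\tfrac{1}{2}\int_{-1}^1 f(x)^2\,dx}.
\end{align*}
This reduces everything to a lower bound on $\int_{-1}^1 f^2$.

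For the pointwise bound, the Gaussian case is immediate from $e^{-t}\geq 1-t$ applied with $t=\frac{\beta}{2}x^2\geq 0$. The Kaiser case is the main obstacle, and I would handle it with a convexity argument. Writing $I_0(z)=G(z^2)$ for the entire function $G(w)=\sum_{k\geq 0}\frac{(w/4)^k}{(k!)^2}$ (cf.\ \eqref{eq:I0Taylor}), we have $W_\beta(x)=G(\beta^2(1-x^2))/G(\beta^2)$. Setting $R(s):=G(\beta^2-s)/G(\beta^2)$ so that $W_\beta(x)=R(\beta^2 x^2)$, I would note that $R$ is convex on $[0,\beta^2]$ because $G$ has nonnegative coefficients (so $G''\geq 0$), and that $R(0)=1$. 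The tangent-line bound for convex functions then gives $R(s)\geq 1+R'(0)s$ for all $s\geq 0$. Finally I would compute $R'(0)=-G'(\beta^2)/G(\beta^2)=-\frac{I_1(\beta)}{2\beta I_0(\beta)}$ (recognizing $2\beta\,G'(\beta^2)=I_1(\beta)$ from termwise differentiation), and invoke $I_1(\beta)\leq I_0(\beta)$ --- which follows from the integral representation \eqref{eq:I0CosIntegral} since $\cos\theta\leq 1$ --- to conclude $R'(0)\geq -\frac{1}{2\beta}$. Substituting $s=\beta^2 x^2\in[0,\beta^2]$ yields $W_\beta(x)\geq 1-\frac{\beta}{2}x^2$.

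With the pointwise bound in hand, and using $f\geq 0$, I have $f(x)^2\geq \max\{0,\,1-\frac{\beta}{2}x^2\}^2$, and I would split into the two regimes. For $\beta\leq 2$ the lower bound is nonnegative on all of $[-1,1]$ and in fact $1-\frac{\beta}{2}x^2\geq 1-x^2\geq 0$, so
\begin{align*}
\int_{-1}^1 f^2\,dx\geq \int_{-1}^1 (1-x^2)^2\,dx=\tfrac{16}{15}\geq 1,
\end{align*}
giving $\fillfrac{f}{\infty}\geq \frac{1}{\sqrt 2}$. For $\beta\geq 2$ the bound vanishes at $|x|=\sqrt{2/\beta}\leq 1$, so restricting the integral to $[-\sqrt{2/\beta},\sqrt{2/\beta}]$ and substituting $x=\sqrt{2/\beta}\,t$ turns the integrand into $(1-t^2)^2$ and produces
\begin{align*}
\int_{-1}^1 f^2\,dx\geq \sqrt{\tfrac{2}{\beta}}\int_{-1}^1(1-t^2)^2\,dt=\tfrac{16}{15}\sqrt{\tfrac{2}{\beta}}\geq \sqrt{\tfrac{2}{\beta}},
\end{align*}
whence $\fillfrac{f}{\infty}\geq \sqrt{\tfrac12\sqrt{2/\beta}}=(2\beta)^{-1/4}$, as claimed.

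I expect the convexity step for $W_\beta$ to be the crux; the remaining integration is routine once the clipping at $|x|=\sqrt{2/\beta}$ is handled and the normalization $\max f=1$ is justified.
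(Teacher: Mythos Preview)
Your proof is correct. The Gaussian pointwise bound and both integration steps essentially match the paper's argument (you take a small shortcut for $\beta\leq 2$ by immediately replacing $1-\tfrac{\beta}{2}x^2$ with $1-x^2$, whereas the paper evaluates $\int_{-1}^1(1-\tfrac{\beta}{2}x^2)^2\,dx=2-\tfrac{2}{3}\beta+\tfrac{\beta^2}{10}$ explicitly and checks it exceeds $1$; both reach the same $16/15$ at the worst case $\beta=2$).

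The genuine difference is in the Kaiser pointwise bound. The paper argues via derivatives: it computes $W_\beta'(x)=-\beta x\,\frac{I_1(\beta\sqrt{1-x^2})}{\sqrt{1-x^2}\,I_0(\beta)}$ and reduces the claim to showing $g(y):=\frac{I_1(\beta y)}{y\,I_0(\beta)}\leq 1$ on $(0,1)$, which it proves by checking $g(1)=I_1(\beta)/I_0(\beta)\leq 1$ and $g'(y)=\frac{\beta I_2(\beta y)}{y\,I_0(\beta)}\geq 0$. Your convexity argument is more self-contained: recognizing $W_\beta(x)=R(\beta^2 x^2)$ with $R(s)=G(\beta^2-s)/G(\beta^2)$ convex (from $G''\geq 0$) lets you invoke the tangent inequality at $s=0$ directly, needing only $I_1\leq I_0$ and the identification $2\beta\,G'(\beta^2)=I_1(\beta)$. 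This avoids the Bessel differentiation identity and the appearance of $I_2$, at the cost of the mild observation that $G$ inherits convexity from its nonnegative Taylor coefficients. Either route is short; yours is arguably cleaner. One cosmetic point: your tangent bound is stated ``for all $s\geq 0$'' but you have only established convexity on $[0,\beta^2]$; since you only apply it at $s=\beta^2 x^2\in[0,\beta^2]$ this is harmless.
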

\begin{proof}
	Since $\exp(x)$ is a convex function we have $\exp(x)\geq 1+x$ and thus $\exp(-\frac{\beta}{2}x^2)\geq 1- \frac{\beta}{2}x^2$. 
	
	Now we prove that $W_\beta(x) \geq 1-\beta x^2/2$ by observing that both functions are even, and they take value 1 at $x=0$. Thus, for showing the inequality it suffices to show that $K'_\beta(x) \geq -\beta x$ for every $x \in (0,1)$. We have that
	\begin{align*}
		K'_\beta(x) &= -\beta x\frac{I_1(\beta \sqrt{1-x^2})}{\sqrt{1-x^2}I_0(\beta)},
	\end{align*}
	so it suffices to show that $g(y):=\frac{I_1(\beta y)}{y I_0(\beta)} \leq 1$ for $y \in
	(0,1)$. Now $g(1)=I_1(\beta)/I_0(\beta)\leq 1$ where the inequality follows from the integral representation of Bessel functions \eqref{eq:I0CosIntegral}. So it suffices to show that $g'(y)\geq 0$ for $y \in (0,1)$. As $g'(y)=(\beta I_2(\beta y))/(y I_0(\beta))$, 
	this holds since both $\beta /(y I_0(\beta)) \geq 0$ and $I_2(\beta y) \geq 0$ follows from \eqref{eq:I0SinIntegral}.
	
	If $\beta\leq 2$ it follows that $\int_{-1}^1 f(x)^2 dx \geq \int_{-1}^1 (1- \frac{\beta}{2}x^2)^2 dx=2-\frac23 \beta +\frac{\beta^2}{10}> 1$,
	and if $\beta\geq 2$ it follows that $\int_{-1}^1 f(x)^2 dx \geq \int_{-\sqrt{\frac{2}{\beta}}}^{\sqrt{\frac{2}{\beta}}} (1- \frac{\beta}{2}x^2)^2 dx=\frac{16}{15}\sqrt{\frac{2}{\beta}}\geq\sqrt{\frac{2}{\beta}}$.
\end{proof}

\begin{lemma}\label{lem:discFill} 
	Let $\beta \geq 0$ and let $f(x)$ be either $\exp(-\frac{\beta}{2}x^2)$ or $W_\beta(x)$. If $N\geq \sqrt{\beta}$ and $|\tilde{f}(\bar{x})-f(\bar{x})|\leq \frac{1}{4}$ for all discrete evaluation points $\bar{x}$ then we have $\fillfrac{\tilde{f}}{N}=\Omega(\frac{1}{\sqrt[4]{\beta+1}})$.
\end{lemma}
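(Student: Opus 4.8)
The plan is to unfold the definition of the discretized filling fraction and reduce the claim to a lower bound on a single discrete average. Writing out $\fillfrac{\tilde{f}}{N} = \|\tilde{f}\|_2^{[N]}/\sqrt{(b-a)\mx{\tilde{f}}^2}$, the factor $(b-a)$ cancels and we are left with
\[
\fillfrac{\tilde{f}}{N} = \frac{\sqrt{\frac{1}{N}\sum_{x} |\tilde{f}(\bar{x})|^2}}{\mx{\tilde{f}}}.
\]
For both the Gaussian and the Kaiser window one has $\mx{f} = f(0) = 1$ (for the Kaiser case because $I_0$ is increasing and $\sqrt{1-x^2}\le 1$, so $W_\beta(x)\le 1$), hence the hypothesis $|\tilde{f}(\bar{x})-f(\bar{x})|\le \tfrac14$ gives $\mx{\tilde{f}} \le \tfrac54$. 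It therefore suffices to prove that the discrete average satisfies $\frac{1}{N}\sum_x |\tilde{f}(\bar{x})|^2 = \bigOm{(\beta+1)^{-1/2}}$, since taking the square root and dividing by $\tfrac54$ then yields the claimed $\bigOm{(\beta+1)^{-1/4}}$.

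Next I would isolate the grid points on which $\tilde{f}$ is guaranteed to be sizeable. By \Cref{lem:contFill} both functions obey $f(x)\ge 1-\tfrac{\beta}{2}x^2$, so setting $r := \min\!\big(1,\tfrac{1}{\sqrt{2\beta}}\big)$ we have $\tfrac{\beta}{2}\bar{x}^2\le \tfrac14$ and thus $f(\bar{x})\ge \tfrac34$ for every $|\bar{x}|\le r$. Combined with the $\tfrac14$-closeness this gives $\tilde{f}(\bar{x})\ge \tfrac12>0$, i.e. $|\tilde{f}(\bar{x})|^2\ge \tfrac14$, on all such points. One checks directly that $r\ge \tfrac{1}{\sqrt{2(\beta+1)}}$, so $r = \bigOm{(\beta+1)^{-1/2}}$; note this step uses only the common quadratic lower bound, so the argument is uniform across the two function families.

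I would then count how many grid points fall into $[-r,r]$. With the symmetric discretization the spacing is $2/N$ and $\bar{x}=0$ is itself an evaluation point, so the number of good points is at least $\max\!\big(1,\tfrac{rN}{2}\big)$ (the integer count of $j$ with $|j|\le rN/2$). Consequently
\[
\frac{1}{N}\sum_{x}|\tilde{f}(\bar{x})|^2 \ge \frac{1}{4N}\max\!\Big(1,\tfrac{rN}{2}\Big) = \frac14\max\!\Big(\tfrac{1}{N},\tfrac{r}{2}\Big).
\]
In the regime $\tfrac{rN}{2}\ge 1$ this is $\ge \tfrac{r}{8} = \bigOm{(\beta+1)^{-1/2}}$; in the opposite regime $\tfrac{rN}{2}<1$ forces $N< 2/r = \bigO{\sqrt{\beta+1}}$, so the surviving term $\tfrac{1}{4N}$ is itself $\bigOm{(\beta+1)^{-1/2}}$. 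Either way $\frac1N\sum_x|\tilde f(\bar x)|^2 = \bigOm{(\beta+1)^{-1/2}}$, which completes the chain.

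The hard part is the coarse-grid regime where $N$ is only just above $\sqrt{\beta}$ and merely a constant number of grid points land in the bulk $[-r,r]$: here a naive Riemann-sum comparison with $\fillfrac{f}{\infty}$ fails, because the discretization error from \Cref{Apx:genDiscBounds} scales like $\sqrt{\beta}/N$, which is only $\bigO{1}$ under $N\ge\sqrt\beta$ and hence too weak against the integral $\sim\!\sqrt{2/\beta}$ for large $\beta$. The resolution is that dividing by $N$ exactly cancels the good-point count, so the estimate survives even when the single point $\bar{x}=0$ carries it; the assumptions $N\ge\sqrt\beta$ together with $0$ being on the grid are precisely what guarantee the bulk is resolved. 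A secondary point to treat carefully is $\mx{\tilde f}\le\tfrac54$: since the hypothesis only controls $\tilde f$ on grid points, one should either read $\mx{\tilde f}$ as the maximum over those points or invoke that the implemented polynomial is a genuine uniform approximant of $f$ in the sense of \Cref{thm:genericComplexity}.
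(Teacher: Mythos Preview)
Your argument is correct and follows essentially the same route as the paper's proof: bound $\mx{\tilde f}\le \tfrac54$, use the quadratic lower bound $f(x)\ge 1-\tfrac{\beta}{2}x^2$ from \Cref{lem:contFill} to guarantee $\tilde f$ is bounded below by a constant on a centered interval of width $\Theta((\beta+1)^{-1/2})$, and then count grid points to lower-bound the discrete average. The only cosmetic differences are that the paper takes the slightly wider interval $[-1/\sqrt{\beta},1/\sqrt{\beta}]\cap[-1,1]$ (obtaining $\tilde f\ge \tfrac14$ rather than your $\tfrac12$) and compresses the grid-point count directly into the final $\Omega(1/\sqrt{\beta+1})$ without separating out the coarse-grid case you handle explicitly.
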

\begin{proof}
	Consider the interval $I=[-\frac{1}{\sqrt{\beta}},\frac{1}{\sqrt{\beta}}]\cap[-1,1]$. For all $\bar{x}\in I$ we have $\tilde{f}(\bar{x})\geq f(\bar{x})-\frac{1}{4}\geq \frac{3}{4}- \frac{\beta}{2}\bar{x}^2\geq \frac{1}{4}$, where the first inequality comes from \Cref{lem:contFill}. Therefore,
	\begin{align*}
		\left(\fillfrac{\tilde{f}}{N}\right)^{\!2} &
		\geq \frac{\sum_{\bar{x}\in I} |\tilde{f}(\bar{x})|^2}{N |\tilde{f}|^2_{\mathrm{max}} }
		\geq \frac{\sum_{\bar{x}\in I}(\frac{1}{4})^2}{N (\frac{5}{4})^2}
		=\Omega\left(\frac{1}{\sqrt{\beta+1}}\right).\qedhere
	\end{align*}
\end{proof}


\section{Asymptotic analysis of Gaussian and Kaiser-window state preparation}\label{App:GaussianComplexity}
Here we prove \Cref{thm:GaussianComplexity}, which we restate below:
\GaussianComplexity*
\begin{proof}
	This follows from \Cref{thm:genericComplexity}. 
    For applying this general result we first invoke our filling-fraction bounds \Cref{lem:contFill} and \Cref{lem:discFill} ensuring that $\mathrm{Min}\left(\fillfrac{f}{N},  \fillfrac{\tilde{f}}{N} \right)=\Omega(\frac{1}{\sqrt[4]{\beta+1}})$. Then \Cref{cor:ExpDeg} and \Cref{cor:KaisDeg} implies that we can find a degree $\bigO{\beta + \log(1/\delta)}$ approximating polynomial that has accuracy $\delta = \eps\cdot\mathrm{Min}\left(\fillfrac{f}{N},   \fillfrac{\tilde{f}}{N} \right)=\Omega(\frac{\eps}{\sqrt[4]{\beta+1}})$, proving \Cref{eq:KaiserComplexity}.
	
	Then \Cref{eq:GaussianComplexity} follows from \Cref{eq:KaiserComplexity} by observing that the function $\exp(-\frac{\beta}{2}x^2)$ is $0.5\eps\cdot\mathrm{Min}\left(\fillfrac{f}{N},   \fillfrac{\tilde{f}}{N} \right)$-close to $0$ for $x\gg \sqrt{\frac{2}{\beta}\ln\left(\frac{\sqrt[4]{\beta+1}}{\eps}\right)}=\bigO{\sqrt{\frac{\log(1/\eps)}{\beta}}}$ so we can assume without loss of generality that our approximation $\tilde{f}(x)$ is $0$ for such large values. But then the task reduces to preparing a Gaussian state with $\beta'=\Theta(\log(1/\eps))$ after rescaling $x\rightarrow x'$ so that $x\approx x' \cdot \sqrt{\frac{\log(1/\eps)}{\beta}}$ and adjusting the value of $N'$ appropriately. Note that by choosing the constants appropriately we can even ensure that $N'$ remains a power of $2$.
\end{proof}

\section{Resource estimation details}\label{App:ResourceEst}

In this appendix, we detail the compilation steps used in our resource estimates. We work within a standard fault-tolerant cost model, where the cost of Clifford gates are dominated by the cost of non-Clifford gates, and so we only count the latter. 

\subsection{Resource estimates for QSVT-based method}
The dominant non-Clifford cost in our method is contributed by the $Z$ rotations within $U_{\mathrm{sin}}$. For a degree $d$ approximation polynomial, and a circuit that requires $R$ rounds of amplitude amplification, these gates contribute
\begin{equation}
    (2R+1)d(n+1)
\end{equation} 
$Z$ rotations. Each rotation must be distilled from a number of $T$ gates. Using the approaches of~\cite{kliuchnikov2022shorter}, we can synthesize a $Z$ rotation to diamond-norm error $\delta$ using $0.57 \log_2(1/\delta) + 8.83$ gates. Assuming that these errors add linearly, we synthesize each $Z$ rotation to accuracy $\delta = \epsilon_s / (2R+1)d(n+1)$, where $\epsilon_s$ is the desired error in the final state from rotation synthesis (note that if using our method as a subroutine within an algorithm with additional rotation gates, the value of $\epsilon_s$ will be further reduced to bound the synthesis error in the entire circuit). The $T$ cost is then
\begin{equation}\label{Eq:ResourceEst}
    (2R+1)d(n+1)(0.57 \log_2((2R+1)d(n+1)/\epsilon_s) + 8.83).
\end{equation} 
A small number of additional non-Clifford gates are contributed by the QSVT rotation gates, and the rotation and reflection gates used in amplitude amplification. The QSVT rotations have a factor $(n+1)$ smaller contribution, while the rotation and reflection gates used in amplitude amplification have factor $(n+1)d$ and $d$ smaller contributions, respectively.

\subsection{Resource estimates for amplitude oracles}

The resources required to realize the piecewise-polynomial amplitude oracle are reproduced from Ref.~\cite[Table II]{haner2018optimizing}. For a gaussian function with $L_\infty$-error $\leq 10^{-7}$, the piecewise polynomial approach requires $20,504$~Toffoli gates per oracle call~\cite{haner2018optimizing} and uses $162$ ancilla qubits\footnote{The qubit counts in Table~II of~\cite{haner2018optimizing} are missing one qubit.}.

The resources required to realize the bespoke gaussian amplitude oracle are reproduced from Ref.~\cite[Table II]{poirier2021efficient}, using the `space saving, $0 \leq x' \leq 10$' row. This gives $7546$ Toffoli gates and $133$ qubits. We note that the estimates for the bespoke gaussian amplitude oracle are an optimistic lower bound, as the resource estimates available in~\cite{poirier2021efficient} consider $n=13$ (note that $n$ in this work corresponds to $d$ in Ref.~\cite{poirier2021efficient}), and target a more peaked gaussian with $\beta=100$ (which results in a lower cost than $\beta=10$).

The resources to realize the amplitude oracle for a gaussian function via linear interpolation are optimized using the methods of Refs.~\cite{sanders2020Optimization, berry2023QuantumSimPseudo}. We require approximately $4069$ Toffoli gates and $181$ ancilla qubits.

The approach of Ref.~\cite{sanders2020Optimization} can be viewed as a highly streamlined version of the piecewise-polynomial approach~\cite{haner2018optimizing}. The function to approximate can be divided into a number of intervals, where we perform a separate linear approximation to the function in each interval. The classically computed linear interpolation coefficients (a gradient and intercept) can be coherently loaded for each interval using quantum read-only memory (QROM)~\cite{babbush2018encoding}, where the value of the register storing $\ket{x}$ acts as the address qubits. This approach is refined for the function $f(x) = e^{-x}$ in Ref.~\cite{berry2023QuantumSimPseudo}, by observing that $e^{-x} = 2^{-z}$ where $z = x/\ln(2)$. The efficiency of computing $2^{-z}$ can be improved by exploiting that $2^{-z} = 2^{-z_{\mathrm{int}}}2^{-z_{\mathrm{frac}}}$, where $\mathrm{int}$ and $\mathrm{frac}$ respectively denote the binary integer and fractional parts of the number. Multiplying by $2^{-z_{\mathrm{int}}}$ can be implemented using controlled bit-shift operations. Hence, it is only necessary to perform a linear interpolation for $2^{-z}$ for $0 \leq z \leq 1$, which can be done with few intervals using the interval spacing of Ref.~\cite{sanders2020Optimization}.

The steps considered are listed below:
\begin{enumerate}
    \item Compute $\ket{x} \ket{0} \rightarrow \ket{x} \ket{\sqrt{\frac{10}{\ln(2)}}x} $.
    \item Compute $\ket{\sqrt{\frac{10}{\ln(2)}}x} \ket{0} \rightarrow \ket{\sqrt{\frac{10}{\ln(2)}}x} \ket{z = \frac{10}{\ln(2)}x^2} $.
    \item Using QROM controlled on $z_h$, the high fractional bits of $z$ (see Refs.~\cite{sanders2020Optimization,berry2023QuantumSimPseudo}), load gradients $m_{z_h}$ and intercepts $c_{z_h}$ for each of $g$ intervals. This requires $g$ Toffoli gates and $\log_2(g)$ ancilla qubits.
    \item Compute the linear interpolation to $2^{-z_{\mathrm{frac}}}$ using one multiplication and one addition (we ignore the cost of the addition in this work).
    \item Apply in-place controlled-bit shifts to multiply by $2^{-z_{\mathrm{int}}}$. 
\end{enumerate}
We find numerically that $g=1900$ intervals suffices to achieve $L_\infty$-error $\leq 10^{-7}$ for a linear interpolation of $2^{-z}$ for $0 \leq z \leq 1$. To store the output of the amplitude oracle to $L_\infty$-error $\leq 10^{-7}$ requires 24 qubits. We use 2 integer and 27 fractional bits for the output register in step 1) above. We use 3 integer bits and 27 fractional bits for the output register in step 2) above. We use 24 bits of each of the registers storing the gradient and intercept in step 3). Finally we use 24 bits for the output register used in steps 4) and 5). The most ancilla qubits required in a step is approximately 50, for the multiplication in step 4). We reuse these during the other steps. The total ancilla count for the linear interpolation amplitude oracle for the Gaussian is then approximately $181$. We note that this could be reduced by uncomputing and reusing work registers, or by using ancilla-free multiplication algorithms. However, this may increase the gate count, and we do not explore these optimizations here.

The gate count depends sensitively on the cost of quantum multiplication, which we treat as roughly $n_b^\alpha \times n_b^\beta$ here, where $n_b^{\alpha/\beta}$ is the number of binary digits used to store each of the numbers. The multiplication in step 1) costs approximately $16 \times 29 = 464$ Toffolis. The squaring in step 2) costs approximately $29^2 = 841$ Toffolis. Loading the QROM in step 3) costs $1900$ Toffolis. The multiplication in step 4) costs approximately $576$ Toffolis. The controlled bit-shift in step 5) costs approximately $288$ Toffolis~\cite{berry2023QuantumSimPseudo}. This gives a total gate count of $4069$ Toffoli gates.

\end{document}